\def\diag{\mathrm{diag}}
\def\Htran{\mbox{\tiny $\mathrm{H}$}}
\def\Ttran{\mbox{\tiny $\mathrm{T}$}}
\def\CN{\mathcal{N}_{\mathbb{C}}} 
\def\imagunit{\mathsf{j}} 
\newcommand{\Exp}{{\mathbb{E}}}
\newcommand{\re}{{\mathrm{Re}}}
\newcommand{\imag}{{\mathrm{Im}}}
\theoremstyle{plain}
\newtheorem{lemma}{Lemma}
\newtheorem{corollary}{Corollary}
\newtheorem{remark}{Remark}
\newtheorem{assumption}{Assumption}
\newcommand{\vect}[1]{{\bf{#1}}}
\newcommand{\ZMT}{{${\bf Z}_{\rm MT}$}}
\newcommand{\ZMR}{{${\bf Z}_{\rm MR}$}}
\begin{document}

\title{\vspace{-0.7cm}\huge{Holographic MIMO Communications: \\ What is the benefit of closely spaced antennas?}}

\author{
\IEEEauthorblockN{Antonio Alberto D'Amico, Luca Sanguinetti, \emph{Senior Member, IEEE}\vspace{-1.cm}
\thanks{
\newline \indent A. A. D'Amico and L.~Sanguinetti are with the Dipartimento di Ingegneria dell'Informazione, University of Pisa, Pisa, Italy (email: luca.sanguinetti@unipi.it, antonio.damico@unipi.it). This work is supported by Huawei Technologies Sweden AB.
}
}}
\maketitle

\begin{abstract}\vspace{0cm}
Holographic MIMO refers to an array (possibly large) with a massive number of antennas that are individually controlled and densely deployed. The aim of this paper is to provide further insights into the advantages (if any) of having closely spaced antennas in the uplink and downlink of a multi-user Holographic MIMO system. To this end, we make use of the multiport communication theory, which ensures physically consistent uplink and downlink models. We first consider a simple uplink scenario with two side-by-side half-wavelength dipoles, two users and single path line-of-sight propagation, and show both analytically and numerically that the \textcolor{blue}{array gain} and average spectral efficiency depend strongly on the directions from which the signals are received and on the array matching network used. The numerical results are subsequently leveraged to extend the analysis into more practical scenarios involving larger arrays of dipoles \textcolor{blue}{(arranged in a uniform linear array)} and a greater number of users. \textcolor{red}{The case where the antennas are densely packed in a space-constrained factor form is also considered. It is found that the spectral efficiency increases with decreasing antenna spacing only for arrays of moderate size, e.g. in the order of a few wavelengths. Comparatively, larger arrays exhibit only marginal improvements in spectral efficiency when compared to arrays with half-wavelength spacing.}
\end{abstract}

\begin{IEEEkeywords}\vspace{0cm}
Holographic MIMO, mutual coupling, circuit theory, matching networks, uplink/downlink duality.
\end{IEEEkeywords}

\section{Introduction} 
Communication theorists are always on the lookout for new technologies to improve the speed and reliability of wireless communications. Chief among the technologies that blossomed into major advances is the multiple antenna technology, whose latest implementation is Massive MIMO (multiple-input multiple-output)~\cite{marzetta2010noncooperative, massivemimobook}. Inspired by its potential benefits~\cite{sanguinettiTCOM2020}, new research directions are taking place under different names~\cite{BJORNSON20193}, e.g., Holographic MIMO~\cite{Huang2020} and large intelligent surfaces~\cite{Rusek2018}. Particularly, the former concept refers to an array (possibly \emph{electromagnetically large}, i.e., compared to the wavelength) with a massive number of closely spaced antennas whose electromagnetic interactions inevitably results into mutual coupling~\cite{balanis}. Although exceptions exist, e.g., \cite{7831497,6843218,8350292,9048753,9838533,Heath_2023a,10158708}, the vast majority of the MIMO literature has entirely neglected mutual coupling since it is all about using (possibly \emph{physically large}) arrays with \emph{half-wavelength} antenna spacing~\cite{massivemimobook}. Another major caveat of the classical MIMO literature (in general) is that it mostly relies on the abstractions of signal processing and information theories, which are not always consistent with the physical context of the underlying system.  
Fortunately, there exists a thin, but solid, literature that can be used to overcome these limitations~\cite{Janaswamy2002,Svantesson_ICASSP2001,Wallace2004,Nossek2010,Nossek2014} but its development has been relatively slow due to the less tractable analysis.

 The first attempts in this direction can be found in~\cite{Janaswamy2002,Svantesson_ICASSP2001,Wallace2004}. Particularly, in~\cite{Wallace2004} the authors derived the model of a single-user MIMO communication system as an electrical network described by scattering matrices. This allows to account for the mutual coupling between transmit and/or receive antennas. \textcolor{blue}{A matching network was also introduced at the receiver to maximize the power transfer from the outputs of the receive antennas to the loads.} The framework developed in~\cite{Wallace2004} is also among the first to connect the physical power to the abstract concept used in signal and information theories. An alternative framework is developed in~\cite{Nossek2010,Nossek2014} based on the multiport communication theory. This involves a circuit theoretic approach where the inputs and outputs of the multiple antenna communication system are associated with ports of a multi-port black-box, described by impedance matrices. \textcolor{blue}{Notice that the two frameworks above are equivalent} and the multiport communication theory has been used in the MIMO literature to study several aspects. 
 {For example, in~\cite{yordanov2009arrays,Ivrlac2009ICC,Laas2020} the transmit/receive array gain is evaluated (with and without matching networks) for uniform linear and circular arrays. The diversity gain is investigated in \cite{ivrlavc2011diversity}, while the effects of the antenna separation on the mutual information of two Hertzian dipoles are analyzed in~\cite{Nossek2014}. The multiport communication theory is also used in~\cite{Laas2020_Reciprocity} for studying the uplink/downlink reciprocity and mutual information of multi-user MIMO systems. More recently, \cite{Bamelak_2023} used it to investigate the impact of mutual coupling in the channel estimation of single-user MIMO communications.}    

 The main objectives of this paper are two fold: $i$) to use the multiport communication theory to derive physically consistent uplink and downlink models for multi-user Holographic MIMO communications with linear processing; and $ii$) to use the developed models to answer the following question: \textit{what are the spectral efficiency advantages (if any) of having closely spaced antennas?} To answer this question, we first consider a simple uplink scenario with two side-by-side half-wavelength dipoles at the base station (BS), two user equipments (UEs) and single path line-of-sight (LoS) propagation. In this context, we show both analytically and numerically that the \textcolor{blue}{array gain}, interference gain and spectral efficiency depend strongly on
the directions from which the UE signals are received and on the array matching network used at the BS. \textcolor{blue}{Benefits are only attainable through impedance matching (e.g., \cite{Volodymyr_2022,Heath_2023a,10158708}) and specific incident signal directions. However, implementing impedance matching for arrays with numerous antennas poses significant challenges \cite{Laas2020_Reciprocity}. Furthermore, in practical multi-user systems, signal directions are uncontrollable due to their dependency on user locations.} In these cases, the gains may be marginal or even non-existent.
The internal losses within the dipole antennas are also shown to significantly impact the
spectral efficiency as the spacing reduces. Numerical results are then used to show that similar conclusions
hold in more practical scenarios with any number of UEs and any number of \textcolor{blue}{side-by-side dipole antennas arranged in a uniform linear array at the BS.} Particularly, the analysis is conducted in the following two cases: $i$) the number of dipoles is fixed as we vary their spacing; \textcolor{blue}{$ii$)} \textcolor{blue}{the size of the uniform linear array} is fixed as we vary the dipole spacing. In the latter case, it turns out that the spectral efficiency increases \textcolor{blue}{as} the antenna distance reduces.  \textcolor{red}{This is most noticeable for arrays spanning a few wavelengths. In contrast, larger arrays show only marginal improvements in spectral efficiency compared to half-wavelength arrays.}  

Although most of the analysis focuses on the uplink, we also investigate the downlink. Particular attention is given to the uplink and downlink duality in the presence of different matching networks. {Specifically, we show that the downlink and uplink channels are reciprocal up to a linear transformation. In line with \cite{Laas2020_Reciprocity}, the ordinary channel reciprocity (i.e., no linear transformation) holds true only if full matching networks (that are hard to implement in arrays with many antennas \cite{Laas2020_Reciprocity}) are employed at both sides.} Numerical results are used to quantify the spectral efficiency loss when the linear transformation is not applied.

The remainder of this paper is organized as follows. In Section~\ref{sec:system_model}, we review the Multiport Communication Theory from~\cite{Nossek2014}. In Section~III, we show how to compute the impedance matrices when a uniform linear array made of half-wavelength dipoles is used at both sides. In Section IV, the uplink and downlink signal models for Holographic MIMO communications are derived on the basis of the
multiport communication model. The concept of uplink and downlink duality is also discussed. To showcase what is the impact of mutual coupling, a simple case study with two dipole antennas and two UEs is considered in Section V. The analysis is then extended in Section VI to more realistic scenarios with multiple antennas, multiple UEs and arrays of varying or fixed aperture. Conclusions are drawn in Section VI. 

\textcolor{blue}{\textit{Notation:} Lower-case bold letters are used for vectors and upper-case bold letters are used for matrices. ${\bf n}\sim \mathcal{N}_\mathbb{C}({\bf 0}, {\bf R})$ denotes the circularly symmetric complex Gaussian distribution with zero mean and covariance matrix ${\bf R}$. We use $\mathbb{E}\{\cdot\}$ to indicate the expectation operator. The operators $^{\Ttran}$, $^{*}$ , and $^{\Htran}$ denote transpose, complex conjugate, and Hermitian transpose, respectively. The Euclidean norm is denoted by $\left\|\cdot\right\|$ and $|\cdot|$ is the absolute value. We use ${\bf a} \cdot {\bf b}$ and $\odot$ to denote the scalar product and the Hadamard product between ${\bf a}$ and ${\bf b}$, respectively.}

\textit{Reproducible research:} The Matlab code used to obtain the simulation results will be made available upon completion of the review process.


\begin{figure}[t!]\vspace{0cm}
\centering
\includegraphics[width = 1\columnwidth]{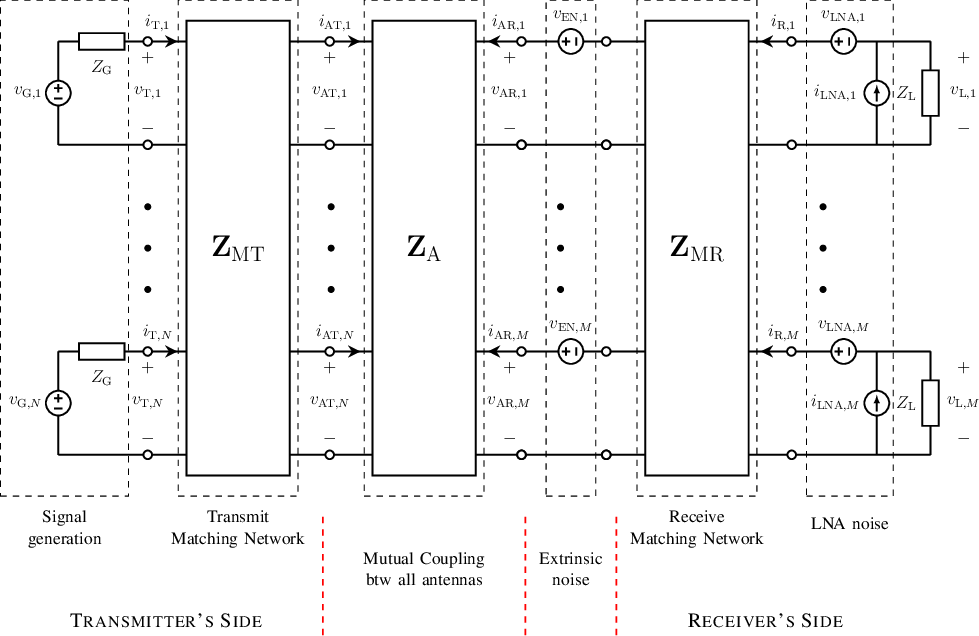}\vspace{-0.1cm}
\caption{{\textcolor{blue}{Physical model of a multi-antenna communication system, based on the circuit theoretic concept of linear multiports~\cite{Nossek2014}}.}
\vspace{-0.25cm}}
\label{fig: Network Model}
\end{figure}

\section{Review of Multiport Communication Theory}  \label{sec:system_model}

Consider a narrowband communication system equipped with $M$ antennas at the receiver and $N$ antennas at the source. This is  described by the following discrete-time input-output relation~\cite{TseBook}:
\begin{equation}\label{eq:MIMO_channel}
{\bf y} = {\bf H}{\bf x} + {\bf n}
\end{equation} 
where ${\bf y}\in \mathbb{C}^{M}$ and ${\bf x}\in \mathbb{C}^{N}$ denote the output and input vectors, respectively. The vector ${\bf x}$ must satisfy $\mathbb{E}\{{{\bf x}^{\Htran}{\bf x}}\}\le P_{\rm T}$ to constrain the total transmit power. Also, ${\bf n}\sim \mathcal{N}_\mathbb{C}({\bf 0}, {\bf R}_n)$ is the additive Gaussian noise and ${\bf H}\in \mathbb{C}^{M \times N}$ is the MIMO channel matrix. {The input-output relation in \eqref{eq:MIMO_channel} can be used to model a great variety of multiple antenna communication systems. In order to successfully model a particular one, there is the need to encode the physical context of the system into it. This is exactly the point where the circuit theoretic concept of linear multiports from~\cite{Nossek2014} comes into play.}
%
The physical model, based on the circuit theoretic approach, is shown in Fig.~\ref{fig: Network Model}. It consists of four basic parts: \emph{signal generation}, \emph{impedance matching}, \emph{antenna mutual coupling}, and \emph{noise}. The meaning of each part is briefly reviewed next. More details can be found in~\cite{Nossek2014}. 

\subsection{Signal generation and power} 
The generation of the $n$th physical signal that is to be transmitted is modeled by a voltage source, with complex envelope $v_{{\rm G},n}$, in series with the impedance \textcolor{blue}{$Z_{{\rm G}}=R_{{\rm G}}+\imagunit X_{{\rm G}}$}. The average available power of the voltage generator is $P_{{\rm a},n}=\frac{\Exp\{ |{v}_{{\rm G},n}|^{2}\}}{4R_{{\rm G}}}$
where the expectation accounts for signal randomness. Letting ${\bf v}_{{\rm G}} = [{v}_{{\rm G},1}, {v}_{{\rm G},2}, \ldots,{v}_{{\rm G},N}]^{T}$, the total average available power is thus
\begin{equation}\label{available_power}
P_{{\rm a}}=\sum\limits_{n=1}^NP_{{\rm a},n} = \frac{\Exp\{ {\bf v}^{\Htran}_{{\rm G}}{\bf v}_{{\rm G}}\}}{4R_{{\rm G}}}.
\end{equation}

\subsection{Impedance matrices} The transmit/receive matching networks are multiport systems described by the impedance matrices {\ZMT} and {\ZMR}. In particular, ${\bf Z}_{\rm MT}\in \mathbb{C}^{2N \times 2N}$ and ${\bf Z}_{\rm MR} \in \mathbb{C}^{2M \times 2M}$ are given by
\begin{equation}
\label{eq:ZMT}
\begin{split}
{\bf Z}_{\rm MT} = \left[\begin{array}{cc} {\bf Z}_{{\rm MT},11} & {\bf Z}_{{\rm MT},12} \\ {\bf Z}_{{\rm MT},21} & {\bf Z}_{{\rm MT},22} \end{array}\right] \\  {\bf Z}_{\rm MR} = \left[\begin{array}{cc} {\bf Z}_{{\rm MR},11} & {\bf Z}_{{\rm MR},12} \\ {\bf Z}_{{\rm MR},21} & {\bf Z}_{{\rm MR},22} \end{array}\right]
\end{split}
\end{equation}
with $\{{\bf Z}_{{\rm MT},ij}\in \mathbb{C}^{N \times N}; i =1,2,  j =1,2\}$ and $\{{\bf Z}_{{\rm MR},ij}\in \mathbb{C}^{M \times M}; i =1,2,  j =1,2\}$. We assume that the impedance matching networks are \textit{lossless}, {reciprocal}~\cite{Nossek2010}, and noiseless~\cite{Nyquist1928}. 

The impedance matrix ${\bf Z}_{\rm A} \in \mathbb{C}^{(N+M) \times (N+M)}$ accounts for the mutual coupling between antennas and can be partitioned as:
\begin{equation}
\label{eq:ZA}
{\bf Z}_{\rm A}=\left[\begin{array}{cc} {\bf Z}_{{\rm AT}} & {\bf Z}_{{\rm ATR}} \\ {\bf Z}_{{\rm ART}} & {\bf Z}_{{\rm AR}} \end{array}\right].
\end{equation}
Particularly, ${\bf Z}_{{\rm AT}} \in \mathbb C^{N \times N}$ and ${\bf Z}_{{\rm AR}} \in \mathbb{C}^{M \times M}$ quantify the mutual coupling at the transmit and receive sides (\textit{intra-array coupling}), respectively, while ${\bf Z}_{{\rm ATR}} \in  \mathbb{C}^{N \times M}$ and ${\bf Z}_{{\rm ART}} \in  \mathbb{C}^{M \times N}$ model the mutual coupling between the transmit  and the receive arrays (\textit{inter-array coupling}).  Because antennas are reciprocal, e.g.~\cite{balanis}, we have  ${\bf Z}_{{\rm AT}}={\bf Z}^{T}_{{\rm AT}}$, ${\bf Z}_{{\rm AR}}={\bf Z}^{T}_{{\rm AR}}$, and ${\bf Z}_{{\rm ATR}}={\bf Z}^{T}_{{\rm ART}}$.

A common approximation for ${\bf Z}_{\rm A}$ follows from the unilateral assumption, according to which ${\bf Z}_{{\rm ATR}} \approx {\bf 0}_{N \times M}$.
This basically implies that the currents at the receiver do not produce effects on the transmitter. From a mathematical standpoint, it requires that $||{\bf Z}_{{\rm AT}} {\bf i}_{\rm AT}|| \gg ||{\bf Z}_{{\rm ATR}} {\bf i}_{\rm AR}||$
where ${\bf i}_{\rm AT}$ and ${\bf i}_{\rm AR}$ are the vectors of currents at the transmitting and receiving arrays, respectively (see Fig.~\ref{fig: Network Model}). In practice, it implies that the transmitting antennas are not affected by the presence of the receiving antennas. This is true as long as the transmit and receive arrays are sufficiently separated in space as it happens in any practical communication network.\footnote{It may not hold true if different short-range applications are considered, e.g., near-field communications or short-range simultaneous wireless information and power transfer systems.}

\subsection{Losses in the antennas}
Notice that even though antennas may possess minimal loss, this can become significant when substantial electric currents are required to transmit a specific power. Particularly when dealing with close antenna spacing, the internal losses within the antenna can significantly impact the performance. A common way to account for this, it is to include a dissipation resistance that is connected in series. This implies that the impedance matrices ${ \bf Z}_{\rm AT}$ and ${\bf Z}_{\rm AR}$ must be replaced with
\begin{align}
{ \bf Z}_{\rm AT} &\to { \bf Z}_{\rm AT}+ R_{\rm d} {\bf I}_N \quad \quad
{ \bf Z}_{\rm AR} \to { \bf Z}_{\rm AR}+ R_{\rm d} {\bf I}_M.
\end{align}
If different dissipation resistances are used at the different antennas, then the matrices $R_{\rm d}{\bf I}_N$ and $R_{\rm d} {\bf I}_M$ should be replaced with diagonal matrices. For a half-wavelength dipole the expression of the dissipation resistance can be found in~\cite[Example 2.13]{balanis}.

\subsection{Noise sources}
The vector ${\bf v}_{{\rm EN}}$ accounts for the \textit{extrinsic noise} originating from the background radiation, and its entries represent the complex envelopes of the voltages that appear at the antenna ports when no currents flow, i.e., open-circuit noise voltages~\cite{Nossek2010}. The elements of ${\bf v}_{{\rm EN}}$ are zero-mean correlated random variables, with ${\bf R}_{\rm EN}=\Exp{\{{\bf v}_{{\rm EN}} {\bf v}^{\Htran}_{{\rm EN}}\}}$. A commonly adopted model is \cite[Sec. II-E]{Nossek2010}
\begin{equation}
\label{RA}
{\bf R}_{\rm EN}=4 k_{B} T_{\rm A} \Delta f \re({\bf Z}_{{\rm AR}})
\end{equation} 
where $k_{B}$ is the Boltzmann constant, $T_{\rm A}$ is the noise temperature of the antennas, while $\Delta f$ is the equivalent noise bandwidth that depends on the bandwidth of the desired signal.

The \textit{intrinsic noise} is produced by the subsystems that follow the receive matching network such as \textcolor{blue}{low noise amplifiers (LNAs), mixers, and analog-to-digital converters (ADCs)}. Most of the noise originates from the LNAs, and thus can be modelled by using the voltage and current vectors~\cite{Nossek2010,Rothe1956} given by ${\bf v}_{{\rm LNA}}$ and ${\bf i}_{{\rm LNA}}$, respectively. Both ${\bf v}_{{\rm LNA}}$ and ${\bf i}_{{\rm LNA}}$ are zero-mean random vectors, with the following statistics~\cite[Eq. (10)]{Nossek2010}: $\Exp{\{{\bf i}_{{\rm LNA}} {\bf i}^{\Htran}_{{\rm LNA}}\}}=\sigma^{2}_{i}{\bf I}_{M}$, $\Exp{\{{\bf v}_{{\rm LNA}} {\bf v}^{\Htran}_{{\rm LNA}}\}}=R^{2}_{\rm N} \sigma^{2}_{i} {\bf I}_{M}$ and 
\begin{equation}\label{CrossCorLna}
\Exp{\{{\bf v}_{{\rm LNA}} {\bf i}^{\Htran}_{{\rm LNA}}\}}=\rho R_{\rm N} \sigma^{2}_{i} {\bf I}_{M} \quad
\end{equation}
where $R_{\rm N}$ is the so-called \textit{noise resistance} of the LNAs, usually indicated in the manufacturer data sheets. The complex parameter 
\begin{equation} \label{ro}
\rho=\dfrac{\Exp\{{v}_{{\rm LNA},m}{i}^{\ast}_{{\rm LNA},m}\}}{\sqrt{\Exp\{|{v}_{{\rm LNA},m}|^{2}\}\Exp\{|{i}_{{\rm LNA},m}|^{2}\}}}
\end{equation}
accounts for the correlation between voltage and current noise generators at each port.


\subsection{Input-Output Relation}
Under the unilateral approximation, the input-output relation is~\cite[Eq. (16)]{Nossek2010}
\begin{equation}
\label{eq:ioTotal1}
{\bf v}_{\rm L}= {\bf D} {\bf v}_{\rm G} +{\boldsymbol \eta}
\end{equation}
where ${\bf D}$ and ${\boldsymbol \eta}$ are given by~\cite[Eq. (17)]{Nossek2010}
\begin{align}
\label{matD}
{\bf D}&={\bf Q}\,{\bf Z}_{{\rm RT}}(Z_{\rm G}{\bf I}_N+{\bf Z}_{{\rm T}})^{-1}\\
\label{veceta}
{\boldsymbol \eta}&= {\bf Q}({\bf F}_{\rm R}{\bf v}_{\rm EN}-{\bf v}_{\rm LNA}+{\bf Z}_{{\rm R}}{\bf i}_{\rm LNA})
\end{align}
with~\cite[Eq. (19)]{Nossek2010}
\begin{align}
\label{ZR}
{\bf Z}_{{\rm R}}&={\bf Z}_{{\rm MR},11}-{\bf F}_{{\rm R}}{\bf Z}_{{\rm MR},21}\\\label{ZT}
{\bf Z}_{{\rm T}}&={\bf Z}_{{\rm MT},11}-{\bf F}_{{\rm T}} {\bf Z}_{{\rm MT},21}\\
\label{ZRT}
{\bf Z}_{{\rm RT}}&= {\bf F}_{{\rm R}}  {\bf Z}_{{\rm ART}}  {\bf F}_{{\rm T}} ^{^{\Ttran}} \end{align}
and~\cite[Eq. (20)]{Nossek2010}
\begin{align}\label{matF_R}
{\bf F}_{{\rm R}} &= {\bf Z}_{{\rm MR},12}\left({\bf Z}_{{\rm MR},22} + {\bf Z}_{{\rm AR}}\right)^{-1}\\\label{matF_T}
{\bf F}_{{\rm T}} &={\bf Z}_{{\rm MT},12} \left({\bf Z}_{{\rm MT},22} + {\bf Z}_{{\rm AT}}\right)^{-1} \\\label{matQ}
{\bf Q}&=Z_{\rm L}(Z_{\rm L}{\bf I}_{M}+{\bf Z}_{{\rm R}})^{-1}.
\end{align}
The input-output relation~\eqref{eq:ioTotal1} can be written in a slightly different form (which will turn useful later on) as ${\bf v}_{\rm L} = {\bf Q}({\bf F}_{{\rm R}} {\bf v}_{\rm OC} +\tilde{\boldsymbol \eta})$
where $\tilde{\boldsymbol \eta}={\bf F}_{\rm R}{\bf v}_{\rm EN}-{\bf v}_{\rm LNA}+{\bf Z}_{{\rm R}}{\bf i}_{\rm LNA}$
and
\begin{equation}
\label{voc}
{\bf v}_{\rm OC}= {\bf D}_{\rm OC} {\bf v}_{\rm G}\mathop{=}^{(a)} {\bf Z}_{{\rm ART}} {\bf i}_{{\rm AT}}
\end{equation}
with \begin{equation}\label{doc}
{\bf D}_{\rm OC} =  {\bf Z}_{{\rm ART}}  {\bf F}_{{\rm T}}^{\Ttran} (Z_{\rm G}{\bf I}_N+{\bf Z}_{{\rm T}})^{-1}.
\end{equation}
Notice that ${\bf v}_{\rm OC}$
is the \textit{open circuit} voltage vector as induced by ${\bf i}_{{\rm AT}}$ when ${\bf i}_{{\rm AR}}={\bf 0}$, as it follows from $(a)$ in~\eqref{voc}. In general, ${\bf i}_{{\rm AR}}={\bf 0}$ does not imply that the elements of ${\bf v}_{\rm OC}$ are the same as if the receive antennas were isolated. \textcolor{blue}{This holds true only if the receive antennas are \textit{canonical minimum scattering} (CMS) antennas.}\footnote{According to~\cite{Kahn1965}, \textit{a canonical minimum-scattering antenna is “invisible” when the accessible waveguide terminals are open-circuited.} This means that a vanishing electric current in the antenna does not alter the electromagnetic field.} This is the case of half-wavelength dipoles~\cite{Laas2020}. We finally notice that ${\bf D} {\bf v}_{\rm G} = {\bf Q}{\bf F}_{{\rm R}} {\bf v}_{\rm OC}$ so that, using~\eqref{voc}, we get
 \begin{equation}\label{eq:D_OC_relation}
{\bf D} =  {\bf Q}{\bf F}_{{\rm R}}{\bf D}_{\rm OC}.
\end{equation}

\begin{remark}(Input-output relation without matching networks)
In the absence of a transmit matching network, the input-output relation can simply be obtained by setting ${\bf Z}_{\rm T}={\bf Z}_{\rm AT}$ and ${\bf F}_{\rm T}={\bf I}_{N}$ in~\eqref{matD} and~\eqref{ZRT}, respectively. Analogously, with no receive matching network the input-output relation can be obtained by replacing ${\bf Z}_{\rm R}$ with ${\bf Z}_{\rm AR}$ and ${\bf F}_{\rm R}$ with ${\bf I}_{M}$.
\end{remark}

\subsection{Transmit power and noise covariance matrix} The \textit{transmit power} is defined as the average active power at the output of the transmit matching network, or equivalently, at the input of the transmit antenna array, i.e., \textcolor{blue}{$P_{\rm T}= \frac{1}{2}\Exp\{\re({\bf v}^{\Htran}_{\rm AT}{\bf i}_{\rm AT}) \}$}.
Assuming a lossless transmit matching network, we have $\Exp\{\re({\bf v}^{\Htran}_{\rm AT}{\bf i}_{\rm AT}) \} = \Exp\{\re({\bf v}^{\Htran}_{\rm T}{\bf i}_{\rm T}) \}$. Now, observe that, under the unilateral approximation, ${\bf v}_{\rm T} = {\bf Z}_{\rm T} {\bf i}_{\rm T}$. Since ${\bf v}_{\rm T} = {\bf v}_{\rm G} - Z_{\rm G} {\bf i}_{\rm T}$, we obtain ${\bf i}_{\rm T} = \left(Z_{\rm G} {\bf I}_{N} + {\bf Z}_{\rm T} \right)^{-1}{\bf v}_{\rm G}$
so that $P_{\rm T}$ reduces to
\textcolor{blue}{\begin{equation}
\label{eq:P_T_new}
P_{\rm T}=\frac{1}{2R_{\rm G}}\Exp\{\re({\bf v}_{\rm G}^{\Htran}{\bf B}{\bf v}_{\rm G}) \}
\end{equation}}
with 
\textcolor{blue}{\begin{equation}
\label{matB2}
{\bf B}=R_{\rm G} (Z_{\rm G}{\bf I}_{N}+{\bf Z}_{{\rm T}})^{-\Htran} \re\{{\bf Z}_{{\rm T}}\}  (Z_{\rm G}{\bf I}_{N}+{\bf Z}_{{\rm T}})^{-1}.
\end{equation}}
Notice that $P_{\rm T}$ coincides with the \textit{radiated power} $P_{\rm rad}$ only if the transmit antennas are lossless.

From the statistics of the extrinsic and intrinsic noise, the covariance matrix ${\bf R}_{\eta}$ of ${\boldsymbol \eta}$ in~\eqref{veceta} is:
\begin{equation}
\label{Reta}
{\bf R}_{\eta}={\bf Q}{\bf U}{\bf Q}^{\Htran}
\end{equation}
where ${\bf Q}$ is given in~\eqref{matQ} and ${\bf U}= {\bf U}_{\rm IN} + {\bf U}_{\rm EN}$ is the correlation matrix of $\tilde{\boldsymbol \eta}$ with
\begin{align}\label{Upsilon2_IN}
{\bf U}_{\rm IN}&= \sigma^{2}_{\rm i}\left({\bf Z}_{{\rm R}}{\bf Z}^{\ast}_{{\rm R}}  - 2 R_{\rm N} \re \left( \rho^{\ast} {\bf Z}_{{\rm R}} \right) + R_{\rm N}^2{\bf I}_M\right)
\end{align}
and ${\bf U}_{\rm EN}=  {\bf F}_{\rm R} {\bf R}_{\rm EN} {\bf F}^{\Htran}_{\rm R}$.

\subsection{Matching network optimization}
The transmit matching network ${\bf Z}_{{\rm MT}}$ can be designed to maximize the power delivered to antennas (\textit{power matching} or \emph{maximum power transfer})~\cite{Nossek2010}. This yields ${\bf B}={\bf I}_N$ in~\eqref{matB2}, and 
\begin{align}\label{Z_T_opt}
{\bf Z}_{\rm T} = Z^{\ast}_{\rm G}{\bf I}_N. 
\end{align}
By taking~\eqref{ZT} and~\eqref{matF_T} into account, this can be obtained by setting~\cite{Nossek2010}
\begin{align}\label{Z_MT}
\!\!\!{\bf Z}_{{\rm MT}}^\star= \begin{bmatrix}
-\mathrm{j}{X_{\rm G}}{\bf I}_{N} & -\mathrm{j} \sqrt{R_{\rm G}}\re\{{\bf Z}_{{\rm AT}}\}^{1/2}  \\
-\mathrm{j} \sqrt{R_{\rm G}}\re\{{\bf Z}_{{\rm AT}}\}^{1/2} & -\mathrm{j}{\imag\{{\bf Z}_{\rm AT}\}}  \\
\end{bmatrix}
\end{align}
which yields
\begin{align}\label{F_T_matched}
{\bf F}_{{\rm T}}= -\mathrm{j} \sqrt{R_{\rm G}}\re\{{\bf Z}_{{\rm AT}}\}^{-1/2}.
\end{align}
The receive matching network ${\bf Z}_{{\rm MR}}$  can be  designed to  ensure that the signal-to-noise ratio (SNR) is as large as it can be (\textit{noise matching} or \emph{SNR maximization})~\cite{Nossek2010}. This is achieved with
\begin{align}\label{Z_MR}
{\bf Z}_{{\rm MR}}^\star&= \begin{bmatrix}
\mathrm{j}{\imag\{Z_{\rm opt}\}}{\bf I}_M & \mathrm{j} \sqrt{\re\{Z_{\rm opt}\}}\re\{{\bf Z}_{{\rm AR}}\}^{1/2}  \\
\mathrm{j} \sqrt{\re\{Z_{\rm opt}\}}\re\{{\bf Z}_{{\rm AR}}\}^{1/2} & -\mathrm{j}{\imag\{{\bf Z}_{\rm AR}\}}  \\
\end{bmatrix}.\end{align}
Plugging~\eqref{Z_MR} into~\eqref{ZR} and~\eqref{matF_R} yields
\begin{align}
\label{ZR_matched}
{\bf Z}_{{\rm R}} = Z_{\rm opt} {\bf I}_M\end{align}
with $Z_{\rm opt} = R_{\rm N} \left(\sqrt{1 - (\imag\{\rho\})^2} + \mathrm{j}\imag\{\rho\}\right)$ and 
\begin{align}\label{FR_matched}
{\bf F}_{{\rm R}} = \mathrm{j} \sqrt{\re\{Z_{\rm opt}\}}\re\{{\bf Z}_{{\rm AR}}\}^{-1/2}.
\end{align}
Also, notice that 
\begin{align}\label{Q_matched}
{\bf Q} = \frac{Z_{\rm L}}{Z_{\rm L}+Z_{\rm opt}} {\bf I}_M 
\end{align}
and the covariance matrix ${\bf R}_{\eta} = |Z_{\rm L}|^2 |Z_{\rm L}+Z_{\rm opt}|^{-2}\sigma^2 {\bf I}_M$
becomes diagonal with 
\begin{align}\label{sigma_2}
\begin{split}
\sigma^2&=\sigma^{2}_{i}\left(|Z_{\rm opt}|^2 - 2 R_{\rm N} \re \left( \rho^{\ast} Z_{\rm opt}\right)+R_{\rm N}^2\right) \\ &+ 4 k_{B} T_{A} \Delta f \re\{Z_{\rm opt}\}).
\end{split}
\end{align}
The design of coupled matching networks is very challenging for arrays with a large number of antennas \cite{Nossek2010,Nossek2014}. A practical approach is to make use of a \emph{self-impedance matching network}~\cite[Sect. III.B]{Warnick2009}, instead of a full multiport matching network. This approach neglects the mutual coupling among antennas and replaces, in the design of the matching networks, the impedance matrices ${\bf Z}_{{\rm AT}}$ and ${\bf Z}_{{\rm AR}}$ with the diagonal matrices $\diag({\bf Z}_{{\rm AT}})$ and $\diag({\bf Z}_{{\rm AR}})$ that contain only their diagonal elements. It becomes possible to substitute these matrices for the actual ones in~\eqref{Z_MT} and~\eqref{Z_MR} and specify uncoupled matching networks, as described above. 

\begin{table*}[t!]
\renewcommand{\arraystretch}{1.2}
\centering
\caption{Parameters of the antenna array at the BS.}
\begin{tabular}{c|c|c|c}
{ \bf Parameter} & {\bf Value}&{ \bf Parameter} & {\bf Value}  \\
\hline
Carrier frequency & $3.5$ GHz &Variance of the current noise source & $\sigma^2_i = 2 k_B B_W T_{\rm A}/R_{\mathrm N}$\\
\hline
Bandwidth  & $B_W = 20$ MHz & Radiation resistance   & $R_{\mathrm r} = 73$ \si{\ohm} \\
\hline
Transmit power & $P_T=-30$ dBW &LNA noise resistance & $R_{\mathrm N} = 5\,\si{\ohm}$  \\
\hline
Amplifier and load impedance  & $Z_{\rm G}= Z_{\rm L}= (186 - {\mathrm j}31.6)$ \si{\ohm} &Complex correlation coefficient& $\rho =0.1$ \\
\hline
Noise temperature of antennas  & $T_{\rm A}= 290$ &Dissipation resistance   & $R_{\mathrm d} = 10^{-3}R_{\mathrm r}$ \si{\ohm} \\
\hline
\end{tabular}
\label{tab:array_parameters}
\end{table*}

\section{Computation of the Mutual Coupling Impedance Matrix}
Next, we show how to compute the mutual coupling impedance matrix ${\bf Z}_{\rm A}$. In particular, we assume that the antennas at both sides are half-wavelength dipoles of length $l_{d}=\lambda/2$ and radius $a_{d} \ll l_{d}$. \textcolor{blue}{Specifically, we set $a_{d} = 10^{-4}l_{d}$.} Moreover, we assume that the receiver is equipped with a uniform linear array.

\vspace{-0.3cm}
\subsection{Impedance matrix ${\bf Z}_{{\rm AR}}$} 
We consider ${\bf Z}_{\rm AR}$ but the same analysis follows for ${\bf Z}_{\rm AT}$.
The mutual impedance between dipole $p$ and dipole $q$ is computed as \cite[Eq. (25.4.14)]{orfanidis}
\begin{equation}
\label{Zat,mn}
[{\bf Z}_{\rm AR}]_{pq}=-\dfrac{1}{I_{p}I_{q}}\int\limits_{-l_{d}/2}^{l_{d}/2} e_{qp}(s) I_{q}(s) ds
\end{equation}
where $e_{qp}(s)$ is the component (along the direction of dipole $q$) of the electric field produced by a current $I_{p}(s')$ flowing in dipole $p$, $I_{q}(s)$ is the current flowing in dipole $q$, and finally $I_{p}$ and $I_{q}$ are the currents at the input terminals of dipoles $p$ and $q$, respectively. The current distributions $I_{p}(s')$ and $I_{q}(s)$, for $p,q=1,2,\ldots,N$,  can be found by solving a system of Hall\'en integral equations~\cite[Sec.~25.7]{orfanidis}. 
The solutions can be found by numerical methods (e.g. the method of moments discussed in~\cite[Sec.~24.8]{orfanidis}). A very good approximation of the current distributions $I_{n}(s)$ for \textit{center-fed} dipoles is represented by the \textit{sinusoidal current model}, i.e.,
\begin{equation}
\label{Insin}
I_{n}(s)=I_{n} \dfrac{\sin \left[k\left({l_{d}}/{2}-|s|\right)\right]}{ \sin(kl_{d}/2)}.
\end{equation}
where $k=2 \pi/\lambda$ is the wavenumber. Based on~\eqref{Insin}, closed form expressions for the mutual impedance can be found in~\cite[Eqs. (8.69) and (8.71a-b)]{balanis} for dipoles in \textit{side-by-side configuration}. Closed form expressions are also provided for dipoles in \textit{collinear configuration}~\cite[Eqs. (8.72a-b)]{balanis}, and in \textit{parallel-in-echelon configuration}~\cite[Eqs. (8.73a-b)]{balanis}. As for the self impedance, which coincides with the diagonal elements of ${\bf Z}_{{\rm AR}}$, this can be found in~\cite[Eqs. (8.60a-b) and (8.61a-b)]{balanis}. 

\begin{figure}
\centering
\begin{subfigure}{0.48\textwidth}
\includegraphics[width = \columnwidth]{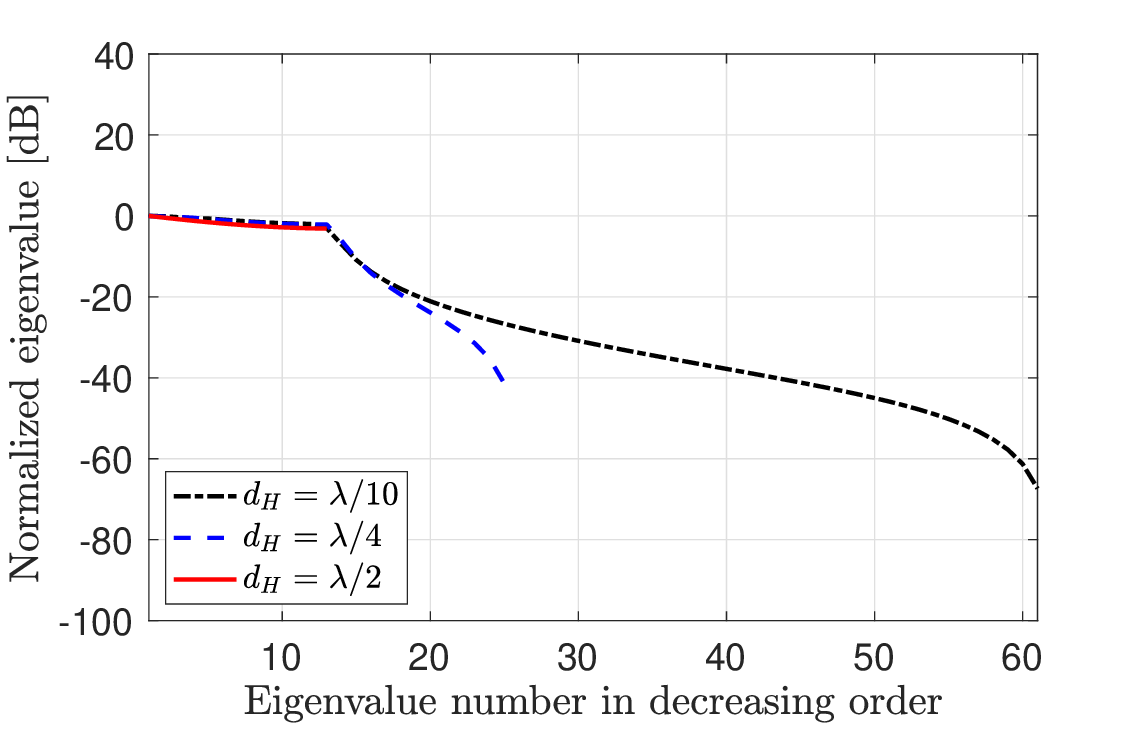}
\caption{Impedance matrix ${\bf Z}_{\rm AR}$}
\label{fig:Eigenvalues_ZAR}
\end{subfigure}
\begin{subfigure}{.48\textwidth}
\centering
\begin{overpic}[width = \columnwidth]{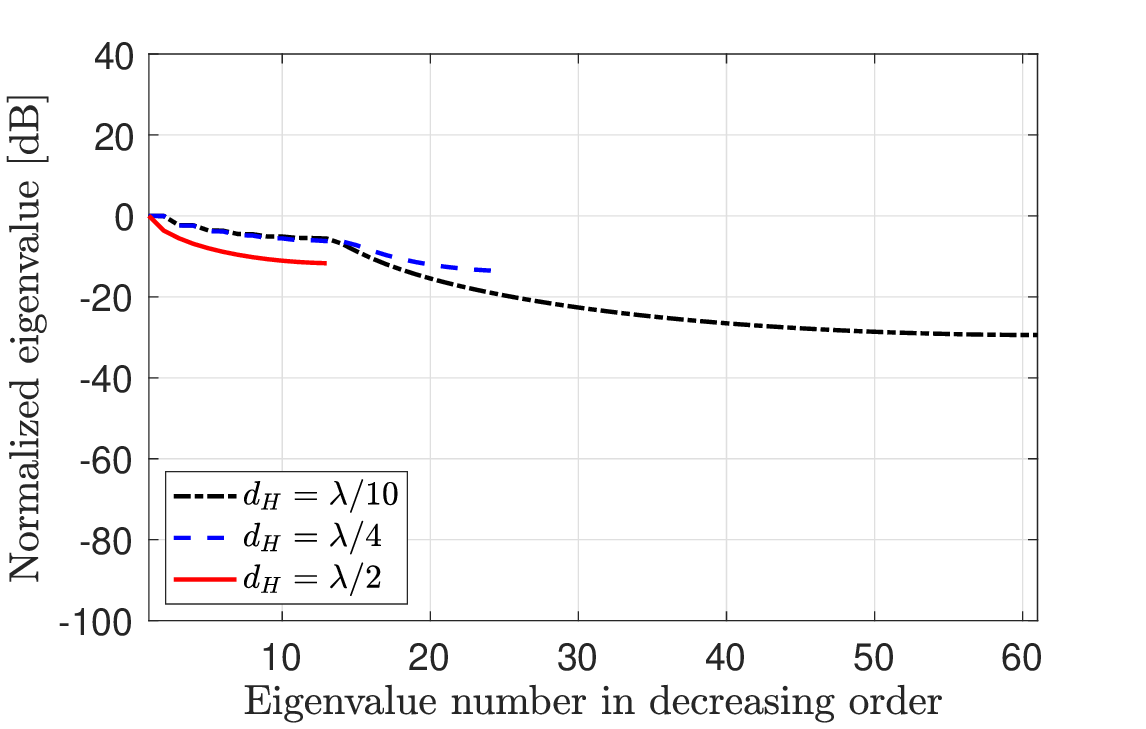}
\end{overpic}
\caption{Noise covariance matrix ${\bf U}$ without matching networks.}
\label{fig:Eigenvalues_U}
\end{subfigure}
\caption{Normalized eigenvalue distribution of ${\bf Z}_{\rm AR}$ and ${\bf U}$.}
\label{fig:NMSE}
\end{figure}

Fig.~\ref{fig:Eigenvalues_ZAR} shows the normalized eigenvalues of ${\bf Z}_{{\rm AR}}$ for an array of $\lambda/2$-dipoles in side-by-side configuration, for three different values of the inter-element spacing, namely $d_H = \lambda/10, \lambda/4$ and $\lambda/2$. The array size is $L_H = 6 \lambda$, and the number of array elements (and hence of eigenvalues) is $L_H / d_H +1$. Matrix ${\bf Z}_{{\rm AR}}$ has been calculated by using the sinusoidal model \eqref{Insin}. 
We see that the number of significant eigenvalues does not change appreciably when $d_H$ decreases below $\lambda/2$, and is approximately $2L_H /\lambda +1$. This means that for $d_H < \lambda/2$ mutual coupling introduces a significant correlation between the different array elements, as expected. 

Fig.~\ref{fig:Eigenvalues_U} shows the normalized eigenvalues of $\bf U$ in \eqref{Reta}, obtained without a matching network (see Remark 1) and with the parameter values reported in Table~\ref{tab:array_parameters}, \textcolor{blue}{e.g., \cite{Nossek2010} and references therein.} The behavior of these eigenvalues is quite different from that of Fig.~\ref{fig:Eigenvalues_ZAR}, because noise correlation not only depends on ${\bf Z}_{{\rm AR}}$ but also on the LNA parameters, and on the presence (and type) of matching networks, as shown in \eqref{Upsilon2_IN}. In particular, the curve corresponding to $d_H=\lambda/2$ seems to indicate that a significant correlation exists between the elements of $\tilde{\boldsymbol{\eta}}$ even with a half-wavelength spacing between the antennas.    

\begin{remark}
Other models for ${\bf Z}_{{\rm AR}}$ rely on the assumption of isotropic antennas or Hertzian dipoles~\cite{yordanov2009arrays,Ivrlac2009ICC,Nossek2010,MullerISIT2012,Nossek2014,Friedlander2020}. Isotropic antennas are inconsistent with the Maxwell equations \cite[Sec.~3.2]{kraus1988antennas}. \textcolor{blue}{Hence, models based on this assumption have no physical meaning \cite[Sec.~III]{Nossek2010}, though they are commonly used in the literature for analytical tractability.} 
In the case of Hertzian dipoles, a uniform current distribution is typically assumed, which approximates well the current distribution of an \textit{infinitesimal} linear wire $(l_{d} \le \lambda/50)$ with plates at its endpoints~\cite[Sec.~4.2]{balanis}. The sinusoidal model is a more accurate representation of the current distribution of any wire antenna~\cite[Sec.~4.3]{balanis}.\vspace{-0.3cm} 
\end{remark}

\vspace{-0.3cm}
\subsection{Impedance matrix ${\bf Z}_{{\rm ART}}$}

\textcolor{blue}{The impedance matrix ${\bf Z}_{{\rm ART}}$ accounts for the mutual coupling between the transmit and receive antennas. It represents the physical wireless propagation channel and can in principle be computed starting from any (e.g., deterministic or stochastic) channel model. Operationally, ${\bf Z}_{{\rm ART}}$ determines the \textit{open-circuit} voltage array response represented by ${\bf v}_{\rm OC}={\bf Z}_{{\rm ART}} {\bf i}_{\rm AT}$, which is obviously influenced by various factors, including the type of antennas, array configuration, polarization and transmission medium. Notice that ${\bf v}_{\rm OC}={\bf Z}_{{\rm ART}} {\bf i}_{\rm AT}={\bf D}_{\rm OC} {\bf v}_{\rm G}$. Hence, it can also be obtained from ${\bf D}_{\rm OC}$. 
In Appendix A, we consider an arbitrary array of CMS antennas located in the far-field region (Assumption 1) of a single transmit antenna, so that ${\bf Z}_{{\rm ART}}$ and ${\bf D}_{\rm OC}$ reduce to the vectors ${\bf Z}_{{\rm ART}}$ and ${\bf D}_{\rm OC}$, respectively. We also assume that the transmission takes place in a LoS propagation scenario, and that the electromagnetic wave generated by the transmitting antenna and incident on an antenna of the receiving array can be approximated locally (i.e., at each receiving element) by a plane wave (Assumption 2). Let $(\theta_m,\phi_m)$ denote the direction of arrival of the plane wave incident on the $m$th receive antenna, and let $r_m$ denote the distance between its center and that of the transmit antenna. Under the above conditions, in Appendix A we show that
\begin{equation}
\label{eq:d_OC}
{\bf d}_{{\rm OC}} = \boldsymbol{\alpha}(\boldsymbol{\psi},{\bf r})\odot{\bf a}({\bf r})
\end{equation}
where $\boldsymbol{\psi}$ and ${\bf r }$ are vectors collecting the directions of arrival and the distances, $\boldsymbol{\alpha}(\boldsymbol{\psi},{\bf r})$ is the vector of the channel gains, and ${\bf a}(\bf r)$ is the array response vector. Their expressions can be found in Appendix A. Notice that \eqref{eq:d_OC} is a quite general model for LoS propagation that applies to arbitrary array configurations and accounts for: \emph{i}) the distances to the different antennas over the array; \emph{ii}) the effective antenna length; \emph{iii}) the losses from polarization mismatch.}

\textcolor{blue}{If the transmitting antenna is in the far-field of the receiving array, the well-known \textit{planar wave approximation} can be achieved \cite{Friedlander2019}. In this scenario, the direction of arrival $(\theta,\phi)$ is aligned with the direction of the line connecting the centers of the array and the transmit antenna, and $r_m$ is replaced by $r$, representing the distance between the two centers. Consequently, \eqref{eq:d_OC} reduces to (e.g., \cite{Friedlander2019})
\begin{equation}
 \label{doc}
 {\bf d}_{{\rm OC}} =  \alpha(\theta,\phi,r) {\bf a}(\theta,\phi)
\end{equation}
where $\alpha(\theta,\phi,r)$ is defined in Appendix A and ${\bf a}(\theta,\phi) = [ e^{\imagunit {\bf k}(\theta,\phi) \cdot {\boldsymbol \delta}_1},\ldots, e^{\imagunit {\bf k}(\theta,\phi)\cdot {\boldsymbol \delta}_M}]^{\Ttran}$ where ${\boldsymbol \delta}_m$ is the displacement vector from the array center to the center of the $m$th receive antenna and ${\bf k}(\theta,\phi)$ is the wave vector~\cite[eq. (17.1.4)]{orfanidis}.}

\section{Holographic MIMO Communications} \label{SectionIV}
We consider a communication system where the BS is equipped with $M_{\rm BS}$ antennas and serves $K$ single-antenna UEs. The uplink and downlink signal models are derived on the basis of the multiport communication model provided in~\eqref{eq:ioTotal1}, by taking into account that in the uplink $N=1$ and $M=M_{\rm BS}$ while $N=M_{\rm BS}$ and $M = 1$ in the downlink. \textcolor{blue}{We assume that lossless matching networks are used at each UE in uplink (i.e., for \emph{power matching}) and downlink (i.e., for \emph{noise matching}). This is reasonable since a single antenna is used at each UE.}


\subsection{Uplink data transmission}

In the uplink, the vector ${\bf v}_{\rm L} \in \mathbb{C}^{M_{\rm BS}}$ of voltages measured at the BS is generated by the superposition of the generator's voltages $\{{v}_{{\rm G},i}; i=1,\ldots,K\}$ of the $K$ single-antenna (i.e., $N=1$) transmitting UEs. The \emph{dimensionless} input-output relation can be obtained from~\eqref{eq:ioTotal1} as
\begin{equation}
\label{eq:ioTotal1_ul}
\frac{{\bf v}_{\rm L}^{\rm ul}}{\sqrt{c}}=\sum_{i=1}^{K}  {\bf d}_i^{\rm ul} \frac{{v}_{{\rm G},i}^{\rm ul}}{\sqrt{c}} +\frac{{\boldsymbol \eta}^{\rm ul}}{\sqrt{c}}
\end{equation}
where $c$ is an arbitrary constant, measured in ${\rm V}^2$ \textcolor{blue}{(Volts$^2$)}, needed to obtain a dimensionless relationship. The vector ${\bf d}_i^{\rm ul}\in \mathbb{C}^{M_{\rm BS}}$ associated with the single-antenna UE $i$ is obtained from~\eqref{eq:D_OC_relation} and reads 
\begin{align}\label{d_i_ul}
\begin{split}
{\bf d}_{i}^{\rm ul} & = {\bf Q}^{\rm ul}{\bf F}_{{\rm R}}^{\rm ul}{\bf d}_{{\rm OC},i}^{\rm ul} \mathop{=}^{(a)} \frac{{ F}_{{\rm T}}^{\rm ul}} {Z_{\rm G}^{\rm ul}+{ Z}_{{\rm T}}^{\rm ul} }{\bf Q}^{\rm ul}{\bf F}_{{\rm R}}^{\rm ul}  {\bf z}_{{\rm ART},i}^{\rm ul}  \\ & \mathop{=}^{(b)} \alpha_{\rm ul}(Z_{\rm L}^{\rm ul}{\bf I}_{M_{\rm BS}}+{\bf Z}_{{\rm R}}^{\rm ul})^{-1}{\bf F}_{{\rm R}}^{\rm ul}  {\bf z}_{{\rm ART},i}^{\rm ul}
\end{split}
\end{align}
where $(a)$ follows from~\eqref{voc} whereas $(b)$ is because a matching network for maximum power transfer is used by UE $i$. From~\eqref{Z_T_opt} and~\eqref{F_T_matched}, this implies ${Z}_{\rm T}^{\rm ul} = {(Z^{{\rm ul}}_{\rm G})}^{\ast}$ and ${F}_{{\rm T}}^{\rm ul} = -\mathrm{j} \sqrt{R_{\rm G}}\re\{{Z}_{{\rm AT}}^{\rm ul} \}^{-1/2}$, where ${Z}_{{\rm AT}}^{\rm ul} $ is the transmitting antenna impedance. In~\eqref{d_i_ul}, we have defined
\begin{equation}
\label{alfa_ul}
\alpha_{\rm ul} = -\frac{\mathrm{j}Z_{\rm L}^{\rm ul}}{2\sqrt{{R_{\rm G}^{\rm ul}}{\re\{{Z}_{{\rm AT}}^{\rm ul}\}}}}.
\end{equation}
From~\eqref{available_power} and~\eqref{eq:P_T_new}, the transmit power of a single-antenna UE $i$ can be computed as $P_{{\rm T},i}= b^{\rm ul} P_{{\rm a},i}$
where \textcolor{blue}{$P_{{\rm a},i} = \frac{1}{4R_{\rm G}}\Exp\{|v_{{\rm G},i}|^{2} \}$} is its available power and
\begin{equation}
\label{b}
\textcolor{blue}{b^{\rm ul}=2R_{\rm G}\dfrac{\re(Z_{\rm T}^{\rm ul})}{|Z_{\rm G}^{\rm ul}+Z_{\rm T}^{\rm ul}|^{2}}\mathop{=}^{(a)} \dfrac{1}{2}}
\end{equation}
is the fraction of available power delivered to the transmitting antenna. Notice that $(a)$ follows because ${Z}_{\rm T}^{\rm ul} = {(Z^{{\rm ul}}_{\rm G})}^{\ast}$ when a matching network for maximum power transfer is used by UE $i$.

\begin{table*}[t!]\vspace{0cm}
\renewcommand{\arraystretch}{1.3}
\centering
\caption{Relationship between uplink and downlink channels with different matching designs.}\vspace{0cm}
\begin{tabular}{c|c|c|c}
{ \bf Channel} & {\bf Arbitrary matching networks} & {\bf Without matching networks } & {\bf With full matching networks } \\
\hline
${\bf d}_k^{\rm dl}$& $ \dfrac {\alpha_{\rm dl}} {\alpha_{\rm ul}} {\bf A}_{\rm dl,ul} {\bf d}_k^{\rm ul} $  & $ \dfrac {\alpha_{\rm dl}} {\alpha_{\rm ul}}  {\bf d}_k^{\rm ul}
$ & $\dfrac {\xi_{\rm dl}} {\xi_{\rm ul}} {\bf d}_k^{\rm ul}$\\
\hline
${\bf h}_k^{\rm dl}$ & $ \dfrac {\alpha_{\rm dl}} {\alpha_{\rm ul}} {\bf B}_{\rm dl}^{-\Ttran/2}{\bf A}_{\rm dl,ul}{\bf h}_k^{\rm ul}$ & $ \dfrac {\alpha_{\rm dl}} {\alpha_{\rm ul}} {\bf B}_{\rm dl}^{-\Ttran/2} {\bf h}_k^{\rm ul}$ & $\dfrac {\xi_{\rm dl}} {\xi_{\rm ul}} {\bf h}_k^{\rm ul}$\\
\hline
\end{tabular}\vspace{0cm}
\label{tab:duality}
\end{table*}

By setting $\vect{y}^{\rm ul}= {\bf v}_{\rm L}/\sqrt{c}$, $\vect{h}_{i}^{\rm ul} = {\bf d}^{\rm ul}_i$, $x_i^{\rm ul} = {v}_{{\rm G},i} / \sqrt{c}$ and $\vect{n}^{\rm ul}={\boldsymbol \eta}^{\rm ul}/{\sqrt{c}}$, the input-output relation of the multi-user MIMO system in the form~\eqref{eq:MIMO_channel} follows:
\begin{align} \label{eq:uplink-signal-model_ul}
\vect{y}^{\rm ul}=  \sum_{i=1}^{K} \vect{h}_{i}^{\rm ul} x_i^{\rm ul} + \vect{n}^{\rm ul}.
\end{align}
 The data signal $x_i^{\rm ul}$ from UE~$i$ is modelled as $x_i^{\rm ul}\sim \CN({0}, p_i)$ with 
\begin{equation}
\label{TxPow2}
\textcolor{blue}{P_{{\rm T},i}=P_{{\rm a},i} = \dfrac{c}{2R_{\rm G}}p_i}.
\end{equation}
The vector $\vect{n}^{\rm ul}\sim \CN(\vect{0}_{M_{\rm BS}}, \vect{R}_{n}^{\rm ul})$ is independent noise with covariance matrix $\vect{R}_{n}^{\rm ul}= c^{-1}{\bf R}_{\eta}$, where ${\bf R}_{\eta}$ is given by~\eqref{Reta}.
Since $c$ is an arbitrary constant, we assume $c=1\,{\rm V}^2$ without loss of generality.

To decode $x_k^{\rm ul}$, the vector $\mathbf{y}^{\rm ul}$ is processed with the combining vector
$\mathbf{u}_k\in\mathbb{C}^{M_{\rm BS}}$. By treating the interference as noise, the spectral efficiency (SE) for UE $k$ is
$\log_2\left(1+\gamma_k^{\rm ul}\right)$ where
\begin{align}\label{eq:sinr_ul}
  \gamma_k^{\rm ul} = \frac{p_k\left|\mathbf{u}_k^{\Htran} \mathbf{h}_k^{\rm ul}\right|^2}
  { \sum_{i\neq k}{p_i\left|\mathbf{u}_k^{\Htran} \mathbf{h}_i^{\rm ul}\right|^2} + \mathbf{u}_k^{\Htran}\mathbf{R}_n^{\rm ul}\mathbf{u}_k}
\end{align}
is the SINR. We consider both MR and MMSE combining \cite{massivemimobook}. MR has low computational complexity and maximizes the power of the desired signal, but neglects interference. MMSE has higher complexity but it maximizes the SINR in~\eqref{eq:sinr_ul}. In the first
case, $\mathbf{u}_k=\mathbf{h}_k^{\rm ul}/\left\|\mathbf{h}_k^{\rm ul}\right\|$, while in the second case $  \mathbf{u}_k = ( \sum_{i=1}^{K}{{p_i\mathbf{h}_i^{\rm ul} {(\mathbf{h}_i^{\rm ul})}^{\Htran} }} + \mathbf{R}_n^{\rm ul})^{-1} \mathbf{h}_k^{\rm ul}$.

\subsection{Downlink data transmission} 
In the downlink, the voltage ${v}_{{\rm L},k}^{\rm dl} \in \mathbb{C}$ measured at the single antenna of UE $k$ is generated by the voltage vector ${\bf v}_{\rm G} \in \mathbb{C}^{M_{\rm BS}}$ at the BS array. From~\eqref{eq:ioTotal1}, the \emph{dimensionless} input-output relation is
\begin{equation}
\label{eq:ioTotal1_dl}
\frac{{v}_{{\rm L},k}^{\rm dl}}{\sqrt{c}}=  ({\bf d}_k^{\rm dl})^{\Ttran} \frac{{\bf v}_{\rm G}}{\sqrt{c}} +\frac{ \eta_k}{\sqrt{c}}
\end{equation}
where $c$ is an arbitrary constant measured in ${\rm V}^2$. The vector ${\bf d}_k^{\rm dl}\in \mathbb{C}^{M_{\rm BS}}$ is obtained from~\eqref{eq:D_OC_relation}:
\begin{align}\label{d_i_dl}
\begin{split}
{\bf d}_{i}^{\rm dl} &= {Q}^{\rm dl}{F}_{{\rm R}}^{\rm dl}{\bf d}_{{\rm OC},i}^{\rm dl} \\ &\mathop{=}^{(a)} {Q}^{\rm dl}{F}_{{\rm R}}^{\rm dl}  (Z_{\rm G}^{\rm dl}{\bf I}_{M_{\rm BS}}+{\bf Z}_{{\rm T}}^{\rm dl})^{-1} {\bf F}_{{\rm T}}^{\rm dl} {\bf z}_{{\rm ART},i}^{\rm dl} \\ & \mathop{=}^{(b)} \alpha_{\rm dl} (Z_{\rm G}^{\rm dl}{\bf I}_{M_{\rm BS}}+{\bf Z}_{{\rm T}}^{\rm dl})^{-1} {\bf F}_{{\rm T}}^{\rm dl} {\bf z}_{{\rm ART},i}^{\rm dl}
\end{split}
\end{align}
since the receiving UE has a single antenna. In particular, $(a)$ derives from~\eqref{doc} as ${\bf D}_{\rm OC}$ is a $1 \times {M_{\rm BS}}$ matrix (i.e., a row vector) whose transpose is exactly $(Z_{\rm G}^{\rm dl}{\bf I}_{M_{\rm BS}}+{\bf Z}_{{\rm T}}^{\rm dl})^{-1} {\bf F}_{{\rm T}}^{\rm dl} {\bf z}_{{\rm ART},i}^{\rm dl}$, whereas $(b)$ follows from~\eqref{ZR_matched} and~\eqref{FR_matched}. 
Also, we have defined
\begin{equation}
\label{alfa_dl}
\alpha_{\rm dl} = \frac{ \mathrm{j} Z_{\rm L}^{\rm dl} \sqrt{{\re\{Z_{\rm opt}^{\rm dl}\}}}} {(Z_{\rm L}^{\rm dl}+Z_{\rm opt}^{\rm dl}) \sqrt{{\re\{{Z}_{{\rm AR}}^{\rm dl}\}}}}.
\end{equation}
By setting ${y}_k^{\rm dl}= {v}_{{\rm L},k}^{\rm dl}/\sqrt{c}$, ${n}_k^{\rm dl}={\eta_k^{\rm dl}}/{\sqrt{c}}$ and
\begin{align}\label{eq:channel-model_dl}
\vect{h}_{k}^{\rm dl} &= {\bf B}_{\rm dl}^{-\Ttran/2}{\bf d}_k^{\rm dl} \\ {\bf x}^{\rm dl} &= \frac{1}{\sqrt{c}}{\bf B}_{\rm dl}^{1/2}{\bf v}_{\rm G} \label{eq:data-model_dl}
\end{align}
the input-output relation follows in the form
\begin{align} \label{eq:uplink-signal-model_dl}
{y}_k^{\rm dl}=  {(\vect{h}_k^{\rm dl})}^{\Ttran} {\bf x}^{\rm dl} + {n}_k^{\rm dl}
\end{align}
with $P_{\rm T}=\frac{1}{4R_{\rm G}} \Exp\{\re({\bf v}_{\rm G}^{\Htran}{\bf B}_{\rm dl}{\bf v}_{\rm G})\}=\frac{c}{4R_{\rm G}}\Exp\{ ||{\bf x}^{\rm dl} ||^2\} $.
Since a noise matching network is used at each UE, we have that {${n}_k^{\rm dl}\sim \CN(0, c^{-1}\sigma_{\rm dl}^2)$} with $\sigma_{\rm dl}^2$ given by~\eqref{sigma_2}.
The vector ${\bf x}^{\rm dl}$ is obtained as 
\begin{align}
{\bf x}^{\rm dl} = \sum_{i=1}^{K}{\bf w}_i{\tilde x}_{i}^{\rm{dl}}
\end{align}
where $\tilde x_i^{\rm dl}\sim \CN({0}, p_i)$ is the information-bearing signal and ${\bf w}_i$ is the precoding vector associated with UE $i$ that satisfies $\Exp\{ ||{\bf w}_i||^2\} = 1$ so that $\Exp\{ ||{\bf x}^{\rm dl}||^2\} = \sum_{i=1}^K p_i$ and $P_{\rm T} = \frac{c}{4R_{\rm G}}\sum_{i=1}^K p_i$. By treating the interference as noise, the downlink SE for UE $k$ is
$\log_2\left(1+\gamma_k^{\rm dl}\right)$, where
\begin{align}\label{eq:sinr_dl}
  \gamma_k^{\rm dl} = \frac{p_k\left|{\mathbf{w}^{\Htran}_k \mathbf{h}_k^{\rm dl}}\right|^2}
  { \sum_{i\neq k}{p_i\left|{\mathbf{w}_i^{\Htran} \mathbf{h}_k^{\rm dl}}\right|^2} + \sigma_{\rm dl}^2}
\end{align}
is the SINR for $c=1\,{\rm V}^2$. We assume that ${\bf w}_k = {\overline{\bf w}_k}/{||\overline{\bf w}_k||}$
and consider both MR precoding with $\overline{\bf w}_k=\mathbf{h}_k^{\rm dl}$ and MMSE precoding with $  \overline{\bf w}_k = ( \sum_{i=1}^{K}{{p_i\mathbf{h}_i^{\rm dl} {(\mathbf{h}_i^{\rm dl})}^{\Htran} }} + \sigma_{\rm dl}^2 {\bf I}_{M_{\rm BS}})^{-1} \mathbf{h}_k^{\rm dl}$.


\subsection{Uplink and downlink duality}


The concept of uplink and downlink duality in wireless communications refers to the relationship between the uplink and downlink channels in a communication system, with the exception of a scaling factor. The duality principle states that the uplink channel vector is proportional to the transpose of the downlink channel vector, with a scaling factor that depends on various factors (e.g., antenna gains). The significance of uplink and downlink duality lies in its practical implications for system design and optimization \cite{massivemimobook}. By exploiting this duality, system parameters and algorithms can be jointly designed for both uplink and downlink transmissions, simplifying system complexity and improving overall performance \cite[Sec. 4]{massivemimobook}. For example, channel estimation and combining techniques developed for uplink can be applied to the downlink without modification, leading to significant savings in complexity. Next, we will discuss this duality in three different cases at the BS: $1$) when arbitrary matching networks (e.g., self-impedance matching networks) are used, $2$) when no matching network is employed, and $3$) when full power and noise matching networks are employed. It should be noted that in all these cases, ${\bf z}_{{\rm ART},k}^{\rm ul}={\bf z}_{{\rm ART},k}^{\rm dl}$ holds, which is a result of the reciprocity principle in electromagnetic propagation.    
\smallskip
\subsubsection{Arbitrary matching networks at the BS} In general, when arbitrary matching networks are used at the BS (e.g., self-impedance matching networks), we have that ${\bf F}_{{\rm T}}^{\rm dl} \ne {\bf F}_{{\rm R}}^{\rm ul}$ and ${\bf Z}_{{\rm T}}^{\rm dl} \ne {\bf Z}_{{\rm R}}^{\rm ul}$. From~\eqref{d_i_ul} and~\eqref{d_i_dl}, the physical channels ${\bf d}_{k}^{\rm ul}$ and ${\bf d}_{k}^{\rm dl}$ exhibit reciprocity up to a linear transformation. Particularly, ${\bf d}_k^{\rm dl}$ can be obtained from ${\bf d}_k^{\rm dl}$ as
\begin{equation}
\label{eq:d_uld_dl_arbitrary_matching}
{\bf d}_k^{\rm dl} = \dfrac {\alpha_{\rm dl}} {\alpha_{\rm ul}} {\bf A}_{\rm dl,ul} {\bf d}_k^{\rm ul}
\end{equation}
where we have defined ${\bf A}_{\rm dl,ul} = (Z_{\rm G}^{\rm dl}{\bf I}_{M_{\rm BS}}+{\bf Z}_{{\rm T}}^{\rm dl})^{-1} {\bf F}_{{\rm T}}^{\rm dl} ({\bf F}_{{\rm R}}^{\rm ul})^{-1}(Z_{\rm L}^{\rm ul}{\bf I}_{M_{\rm BS}}+{\bf Z}_{{\rm R}}^{\rm ul})$. 
A similar reciprocity condition is evident for ${\bf h}_k^{\rm ul}$ and ${\bf h}_k^{\rm dl}$ since $\vect{h}_{k}^{\rm ul} = {\bf d}^{\rm ul}_k$ and $\vect{h}_{k}^{\rm dl} = {\bf B}_{\rm dl}^{-\Ttran/2}{\bf d}_k^{\rm dl}$. In particular, from~\eqref{eq:d_uld_dl_arbitrary_matching}, ${\bf h}_k^{\rm dl}$ can be obtained from $\vect{h}_{k}^{\rm ul}$ only if premultiplied by the matrix ${\bf B}_{\rm dl}^{-\Ttran/2}{\bf A}_{\rm dl,ul}$. 
The above results for ${\bf d}_k^{\rm dl}$ and ${\bf h}_k^{\rm dl}$ are summarized in the first column of Table \ref{tab:duality}.

\smallskip
\subsubsection{No matching networks at the BS} 
In the absence of matching networks at the BS, we have that ${\bf F}_{{\rm T}}^{\rm dl} = {\bf F}_{{\rm R}}^{\rm ul} = {\bf I}_M$, ${\bf Z}_{{\rm T}}^{\rm dl} = {\bf Z}_{{\rm AT}}^{\rm dl}$, ${\bf Z}_{{\rm R}}^{\rm ul} = {\bf Z}_{{\rm AR}}^{\rm ul}$. Accordingly,~\eqref{d_i_ul} and~\eqref{d_i_dl} become
\begin{align}\label{d_i_ul_woMR}
{\bf d}_{k}^{\rm ul} = \alpha_{\rm ul} (Z_{\rm L}^{\rm ul}{\bf I}_{M_{\rm BS}}+{\bf Z}_{{\rm AR}}^{\rm ul})^{-1} {\bf z}_{{\rm ART},k}^{\rm ul} \\\label{d_i_dl_woMR}
{\bf d}_{k}^{\rm dl} = \alpha_{\rm dl} (Z_{\rm G}^{\rm dl}{\bf I}_{M_{\rm BS}}+{\bf Z}_{{\rm AT}}^{\rm dl})^{-1} {\bf z}_{{\rm ART},k}^{\rm dl}.\end{align}
We observe that ${\bf z}_{{\rm ART},k}^{\rm ul}={\bf z}_{{\rm ART},k}^{\rm dl}$ and ${\bf Z}^{\rm dl}_{\rm AT} = {\bf Z}^{\rm ul}_{{\rm AR}}$ in~\eqref{d_i_ul_woMR} and~\eqref{d_i_dl_woMR}. Therefore, ${\bf d}_{k}^{\rm ul}={\bf d}_{k}^{\rm dl}$ when $Z_{\rm L}^{\rm ul} = Z_{\rm G}^{\rm dl}$. This condition is satisfied as it involves the load and generator impedances at the BS. 

As for ${\bf h}_k^{\rm ul}$ and ${\bf h}_k^{\rm dl}$, from~\eqref{matB2} we observe that, in the absence of a power matching network, ${\bf B}_{\rm dl}$ is no longer equal to the identity matrix  ${\bf I}_{M_{\rm BS}}$ but is given by
\begin{equation}\label{eq:B_dl}
{\bf B}_{\rm dl} =  R^{\rm dl}_{\rm G}(Z^{\rm dl}_{\rm G}{\bf I}_{M_{\rm BS}}+{\bf Z}^{\rm dl}_{{\rm AT}})^{-\Htran} \re\{{\bf Z}^{\rm dl}_{{\rm AT}}\}  (Z^{\rm dl}_{\rm G}{\bf I}_{M_{\rm BS}}+{\bf Z}^{\rm dl}_{{\rm AT}})^{-1}. 
\end{equation}
Hence, from~\eqref{d_i_ul_woMR} and~\eqref{d_i_dl_woMR} it follows that 
\begin{equation}
\label{eq:no_matching_duality}
{\bf h}_k^{\rm dl} = \dfrac {\alpha_{\rm dl}} {\alpha_{\rm ul}} {\bf B}_{\rm dl}^{-\Ttran/2} {\bf h}_k^{\rm ul}
\end{equation}
where we have used $\vect{h}_{k}^{\rm ul} = {\bf d}^{\rm ul}_k$, $\vect{h}_{k}^{\rm dl} = {\bf B}_{\rm dl}^{-\Ttran/2}{\bf d}_k^{\rm dl}$ and $Z_{\rm L}^{\rm ul} = Z_{\rm G}^{\rm dl}$. The equation above demonstrates that ${\bf h}_k^{\rm dl}$ can be derived from ${\bf h}_k^{\rm ul}$ by multiplying it with the matrix ${\alpha_{\rm dl}} /{\alpha_{\rm ul}}{\bf B}_{\rm dl}^{-\Ttran/2}$. The results for the downlink channel are summarized in the second column of Table \ref{tab:duality}.

\smallskip
\subsubsection{With power and noise matching networks at the BS}
When a noise matching network is used at the BS, the impedance matrix ${\bf Z}_{{\rm MR}}$ is equal to ${\bf Z}_{{\rm MR}}^\star$ in~\eqref{Z_MR}. From~\eqref{FR_matched} and~\eqref{Q_matched}, ${\bf d}_{k}^{\rm ul}$ in~\eqref{d_i_ul} becomes
\begin{equation}
\label{d_i_ul_MR}
{\bf d}_{k}^{\rm ul} = \xi_{\rm ul} \re\{{\bf Z}^{\rm ul}_{{\rm AR}}\}^{-1/2} {\bf z}_{{\rm ART},k}^{\rm ul}
\end{equation}
with
\begin{equation}
\xi_{\rm ul} = \dfrac{1}{2 \sqrt{R_{\rm G}^{\rm ul} \re\{ Z_{{\rm AT}}^{\rm ul} \}}} \dfrac{Z_{\rm L}^{\rm ul} \sqrt{\re\{Z_{\rm opt}^{\rm ul}\}} }{Z_{\rm L}^{\rm ul} + Z_{\rm opt}^{\rm ul}}.
\end{equation}
In the downlink, if a power matching network is used by the BS, then ${\bf Z}_{{\rm MT}}={\bf Z}_{{\rm MT}}^\star$ so that ${\bf Z}_{{\rm T}}^{\rm dl}$ and ${\bf F}_{{\rm T}}^{\rm dl}$ reduce to~\eqref{Z_T_opt}  and~\eqref{F_T_matched}. Hence,~\eqref{d_i_dl} becomes
\begin{align}\label{d_i_dl_MR}
{\bf d}_k^{\rm dl} = \xi_{\rm dl}  \re\{{\bf Z}^{\rm dl}_{{\rm AT}}\}^{-1/2} {\bf z}_{{\rm ART},k}^{\rm dl}
\end{align}
with 
\begin{equation}
\xi_{\rm dl} = \dfrac{1}{2 \sqrt{R_{\rm G}^{\rm dl}\re\{ Z_{{\rm AR}}^{\rm dl} \}}} \dfrac{Z_{\rm L}^{\rm dl} \sqrt{\re\{Z_{\rm opt}^{\rm dl}\}} }{Z_{\rm L}^{\rm dl} + Z_{\rm opt}^{\rm dl}}.
\end{equation}
Notice that ${\bf z}_{{\rm ART},k}^{\rm ul}={\bf z}_{{\rm ART},k}^{\rm dl}$. If the BS uses the same array for transmission and reception, then ${\bf Z}^{\rm dl}_{\rm AT} = {\bf Z}^{\rm ul}_{{\rm AR}}$. Putting together the above results yields
\begin{equation}
\begin{split}
\label{ }
\xi_{\rm dl}^{-1}{\bf d}_k^{\rm dl} &=\re\{{\bf Z}^{\rm dl}_{{\rm AT}}\}^{-1/2} {\bf z}_{{\rm ART},k}^{\rm dl} \\ &= \re\{{\bf Z}^{\rm ul}_{{\rm AR}}\}^{-1/2} {\bf z}_{{\rm ART},k}^{\rm ul} = \xi_{\rm ul}^{-1} {\bf d}_k^{\rm ul}
\end{split}
\end{equation}
which shows that ${\bf d}_k^{\rm ul}$ and ${\bf d}_k^{\rm dl}$ differ only for a scaling factor. This holds also for $\vect{h}_{i}^{\rm ul}$ and $\vect{h}_{i}^{\rm dl}$ since, in the presence of a power matching network, ${\bf B}_{\rm dl}$ reduces to ${\bf I}_{M_{\rm BS}}$ as it follows from~\eqref{matB2}. The results are summarized in the third column of Table \ref{tab:duality}. \textcolor{red}{The same result is obtained in the absence of a noise matching network at the BS by noting that ${\bf B}_{\rm dl}$ in \eqref{eq:no_matching_duality} depends only on ${\bf Z}^{\rm dl}_{{\rm AT}}$, as follows from \eqref{eq:B_dl}. In other words, reciprocity up to a scaling factor is achieved even if only the power matching network is used at the BS.}

\begin{figure}[t!]
\centering
\includegraphics[width = 1\columnwidth]{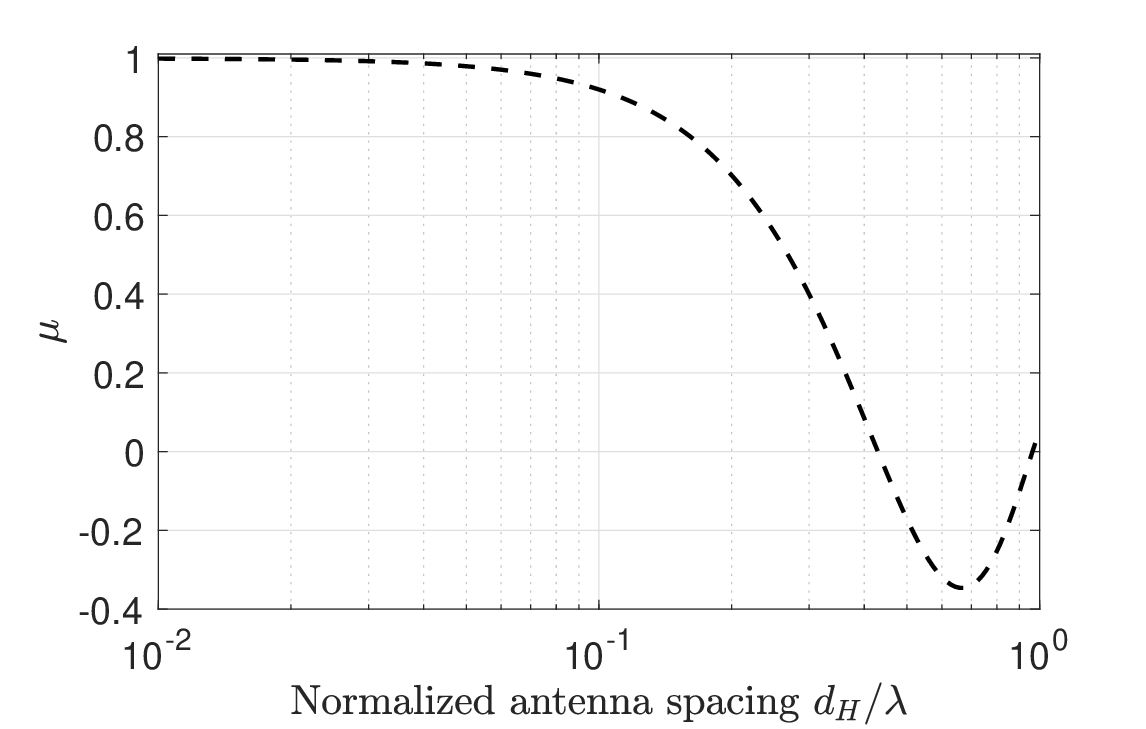}
\caption{ Behaviour of the normalized mutual coupling coefficient $\mu$ for two half-wavelength dipoles with sinusoidal current and side-by-side configuration as the spacing $d_H$ varies. The radiation resistance is $R_{\rm r} = 73$\si{\ohm} while the dissipation resistance is $R_{\rm d} = 10^{-3}R_{\rm r}$.
\vspace{-0.4cm}}
\label{fig:mu}
\end{figure}

\vspace{0 cm}
\section{The Effect of Coupling: A case study with two antennas in a single path LoS scenario}
To showcase what is the impact of mutual coupling in multi-user MIMO, next we consider a simple scenario in uplink with $K=2$ UEs and ${M_{\rm BS}} =2 $ half-wavelength dipoles in side-by-side
configuration. \textcolor{blue}{We consider LoS propagation and assume that the UEs are in the far-field of the array. Hence, we use \eqref{doc} with $\phi_k$ and  $\theta_k$ being the angles of the incident planar wave generated by UE $k=1,2$.} For convenience, we denote
\begin{align}\label{Z_AR_2_2}
\re\{{\bf Z}_{{\rm AR}}^{\rm ul}\}= \left(R_{\rm r}+ R_{\rm d}\right)\left[\begin{matrix}
1 & \mu \\
\mu & 1 
\end{matrix}\right]
\end{align}
where $|\mu| < 1$ accounts for the normalized mutual coupling between the two receiving antennas at the BS. The shape of $\mu$ as a function of the normalized antenna spacing $d_H/\lambda$ is reported in Fig.~\ref{fig:mu} for $R_{\rm r} = 73$\si{\ohm} and $R_{\rm d} = 10^{-3}R_{\rm r}$.
\vspace{-0.25 cm}
\subsection{Array gain}
The following result is found for the array gain, which is valid 
assuming a full matching network.
\begin{lemma} Consider the uplink with $M_{\rm BS}=2$. If a full matching network is used at the BS, then in single path LoS propagation the array gain (compared to a single antenna BS) for UE $k$ is
\begin{align}\label{eq:chanGain_simplified_1}
{\rm Array \, Gain} = 2 \frac{1 - \mu\cos \left(\psi_k\right)}{1 - \mu^2}
\end{align}
\textcolor{blue}{with $\psi_k = 2\pi \frac{d_H}{\lambda}\cos(\theta_k)\sin(\phi_k)$}. 
\end{lemma}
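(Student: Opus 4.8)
The plan is to evaluate the post-combining signal-to-noise ratio of the desired UE both for $M_{\rm BS}=2$ and for a reference single-antenna BS, and to form their ratio. The key enabling fact is that with a full matching network the noise is white: combining \eqref{Reta} with \eqref{Q_matched} gives $\vect{R}_n^{\rm ul}=\frac{|Z_{\rm L}|^2}{|Z_{\rm L}+Z_{\rm opt}|^2}\sigma^2\vect{I}_{M_{\rm BS}}$, and from \eqref{sigma_2} the per-antenna noise level $\sigma^2$ depends only on the LNA and load parameters, not on $M_{\rm BS}$ nor on the coupling. Hence MR combining is SNR-optimal and, with a single desired stream, the SINR \eqref{eq:sinr_ul} collapses to $\gamma_k^{\rm ul}=p_k\|\vect{h}_k^{\rm ul}\|^2/\sigma_n^2$; the array gain is therefore $\|\vect{h}_k^{\rm ul}\|^2$ evaluated for $M_{\rm BS}=2$ divided by $|h_k^{\rm ul}|^2$ for $M_{\rm BS}=1$, since the identical $\sigma_n^2$ cancels.

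First I would specialize the channel to single-path LoS. Starting from $\vect{h}_k^{\rm ul}=\vect{d}_k^{\rm ul}$ and the full-matching form \eqref{d_i_ul_MR} (equivalently \eqref{d_i_ul} with \eqref{FR_matched} and \eqref{Q_matched} inserted), and using the LoS specialization of \eqref{multipath}, namely $\vect{d}_{{\rm OC},k}^{\rm ul}=\alpha(\theta_k,\phi_k)\vect{a}(\theta_k,\phi_k)$, I obtain $\vect{h}_k^{\rm ul}=c_0\,\alpha(\theta_k,\phi_k)\,\re\{\vect{Z}_{\rm AR}^{\rm ul}\}^{-1/2}\vect{a}(\theta_k,\phi_k)$, where the scalar $c_0$ gathers $\xi_{\rm ul}$ (i.e.\ $Z_{\rm opt}$, $Z_{\rm L}$ and $\re\{Z_{\rm opt}\}$) and is again independent of $M_{\rm BS}$. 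For the two-element side-by-side array the steering vector is, up to a common phase, $\vect{a}(\theta_k,\phi_k)=[1,\,e^{\imagunit\psi_k}]^{\Ttran}$ with $\psi_k$ as in \eqref{eq:varphi}, so $\|\vect{h}_k^{\rm ul}\|^2=|c_0|^2|\alpha(\theta_k,\phi_k)|^2\,\vect{a}^{\Htran}\re\{\vect{Z}_{\rm AR}^{\rm ul}\}^{-1}\vect{a}$.

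The core step is the Hermitian form. Inverting \eqref{Z_AR_2_2} gives $\re\{\vect{Z}_{\rm AR}^{\rm ul}\}^{-1}=\frac{1}{(R_{\rm r}+R_{\rm d})(1-\mu^2)}\left[\begin{smallmatrix}1 & -\mu\\ -\mu & 1\end{smallmatrix}\right]$, and substituting $\vect{a}=[1,e^{\imagunit\psi_k}]^{\Ttran}$ the off-diagonal terms combine through $e^{\imagunit\psi_k}+e^{-\imagunit\psi_k}=2\cos\psi_k$, yielding $\vect{a}^{\Htran}\re\{\vect{Z}_{\rm AR}^{\rm ul}\}^{-1}\vect{a}=\frac{2(1-\mu\cos\psi_k)}{(R_{\rm r}+R_{\rm d})(1-\mu^2)}$. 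The $M_{\rm BS}=1$ reference is obtained from the same formulas with no coupling, i.e.\ $\re\{Z_{\rm AR}^{\rm ul}\}^{-1}=(R_{\rm r}+R_{\rm d})^{-1}$ and a unit-modulus scalar steering, so that $|h_k^{\rm ul}|^2=|c_0|^2|\alpha(\theta_k,\phi_k)|^2/(R_{\rm r}+R_{\rm d})$.

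Finally I would take the ratio: $|c_0|^2$, $|\alpha(\theta_k,\phi_k)|^2$ and $(R_{\rm r}+R_{\rm d})$ all cancel, leaving exactly $2(1-\mu\cos\psi_k)/(1-\mu^2)$, which is \eqref{eq:chanGain_simplified_1}. I expect the algebra (a $2\times 2$ inverse and a $\cos$ identity) to be routine; the real obstacle is the bookkeeping that justifies the cancellations, i.e.\ verifying that the channel prefactor $c_0$ and the white-noise level $\sigma^2$ are genuinely $M_{\rm BS}$-independent so that the single-antenna reference is formed with the same constants. This is precisely what makes the array gain depend only on the geometry through $\psi_k$ and on the coupling coefficient $\mu$.
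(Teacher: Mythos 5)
Your proof is correct and follows essentially the same route as the paper: reduce the full-matching SINR to an SNR with white noise, write $\|\vect{h}_k^{\rm ul}\|^2$ as the quadratic form $\vect{a}^{\Htran}\re\{\vect{Z}_{\rm AR}^{\rm ul}\}^{-1}\vect{a}$ via \eqref{d_i_ul_MR} and the LoS steering vector, invert the $2\times 2$ matrix in \eqref{Z_AR_2_2}, and normalize by the single-antenna SNR $\frac{1}{R_{\rm r}+R_{\rm d}}\frac{p_k}{\sigma^2}|\alpha(\theta_k,\phi_k)|^2$. Your added bookkeeping on why the prefactors and the noise level are $M_{\rm BS}$-independent is a harmless elaboration of what the paper leaves implicit.
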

\begin{proof}
 In the case of full matching networks, the SNR $\gamma_k^{\rm ul}$ in \eqref{eq:sinr_ul} reduces to
  \begin{equation}
  \begin{split}
      \gamma_k^{\rm ul} & \triangleq \dfrac{p_k\left|\mathbf{u}_k^{\Htran} \mathbf{h}_k^{\rm ul}\right|^2}
  {  \mathbf{u}_k^{\Htran}\mathbf{R}_n^{\rm ul}\mathbf{u}_k} = \dfrac{p_k}{\frac{|Z_{\rm L}|^2\sigma^2}{|Z_{\rm L}+Z_{\rm opt}|^2}} \|\mathbf{h}_k^{\rm ul}\|^2 \\ &= \frac{p_k}{\sigma^2} \dfrac{|Z_{\rm L}+Z_{\rm opt}|^2}{|Z_{\rm L}|^2} |\xi_{\rm ul}|^2 {{\bf z}_{{\rm ART},k}^{{\rm ul},\Htran}}\re\{{\bf Z}^{\rm ul}_{{\rm AR}}\}^{-1} {\bf z}_{{\rm ART},k}^{\rm ul}
      \end{split}
  \end{equation}
as it follows from \eqref{d_i_ul_MR}. By using ${\bf z}_{{\rm ART},k}^{\rm ul} = Z_0 \alpha'(\theta_{k},\phi_{k},r_k) {\bf a}(\theta_{k},\phi_{k})$ (see Appendix A) and computing the inverse of \eqref{Z_AR_2_2} yields
\begin{equation}\label{snr_M_2_K_1}
      \gamma_k^{\rm ul} =  2A\frac{1 - \mu\cos \left(\psi_k\right)}{1 - \mu^2}.
  \end{equation}
with $A = |\alpha'(\theta_{k},\phi_{k},r_k)|^2 |\xi_{\rm ul}|^2 \frac{|Z_0|^2}{R_{\rm r}+ R_{\rm d}}\frac{|Z_{\rm L}+Z_{\rm opt}|^2}{|Z_{\rm L}|^2} \frac{p_k}{\sigma^2}$. 
The array gain is obtained after normalization with $A$.
\end{proof}

\begin{figure}
\centering
\begin{subfigure}{.45\textwidth}
  \centering
\begin{overpic}[width =\columnwidth]{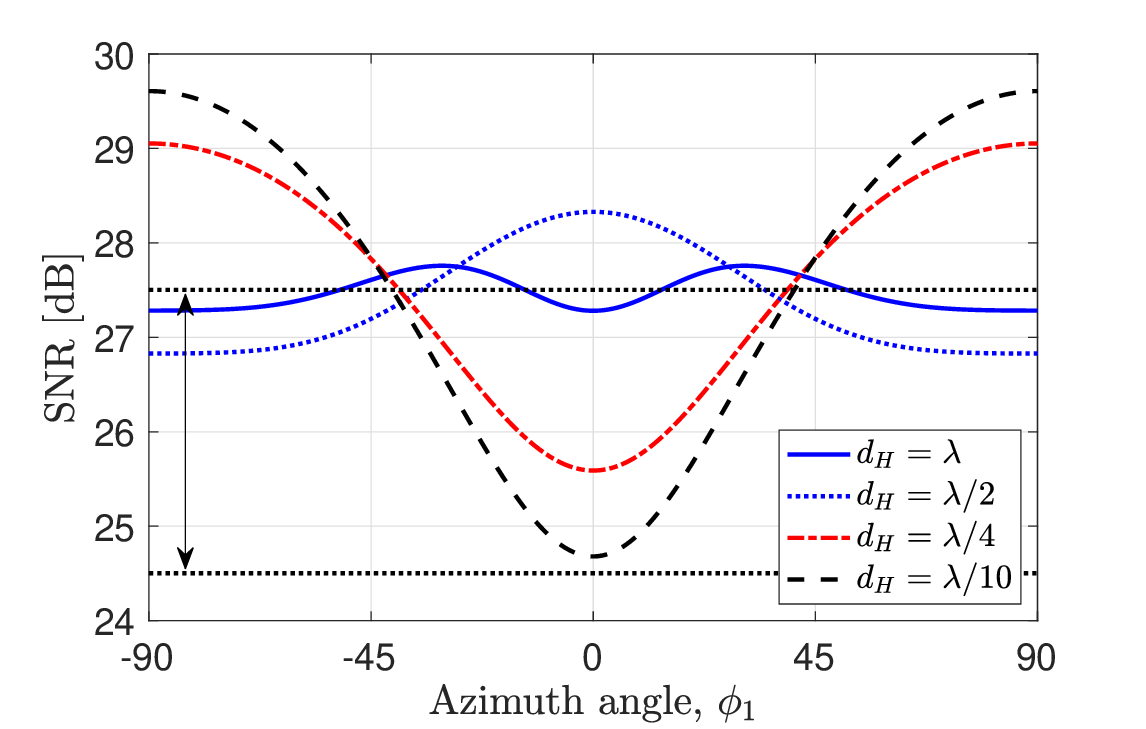}
 \put(17,26){\footnotesize $3$\,dB}
 \put(19,20){\footnotesize $M_{\rm BS}=1$}
     \put(23,19){\vector(1, -1){5}}
\end{overpic}
\caption{Full noise matching network}
\label{fig:sec5_fig4a}
\end{subfigure}
\begin{subfigure}{.45\textwidth}
  \centering
\begin{overpic}[width =\columnwidth]{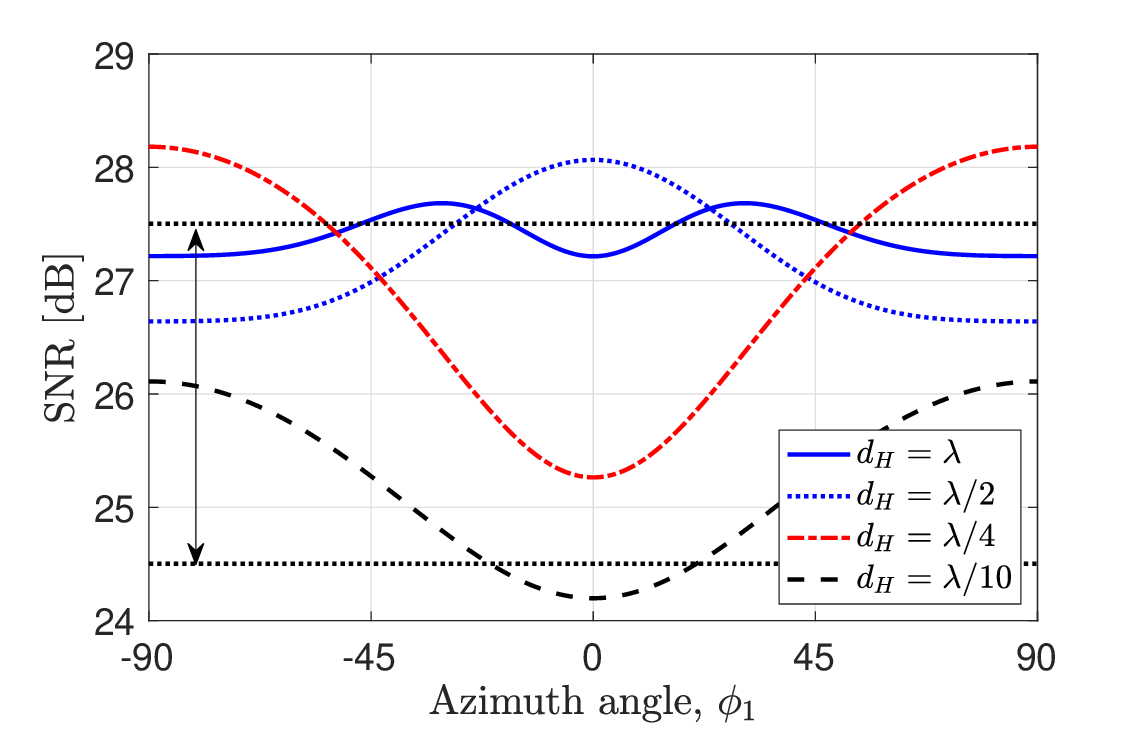}
 \put(18,31){\footnotesize $3$\,dB}
 \put(18,21){\footnotesize ${M_{\rm BS}}=1$}
      \put(23,20){\vector(1, -1){5}}
\end{overpic}
\caption{Self-impedance noise matching network}
\label{fig:sec5_fig4b}
\end{subfigure}
\begin{subfigure}{.45\textwidth}
  \centering
\begin{overpic}[width =\columnwidth]{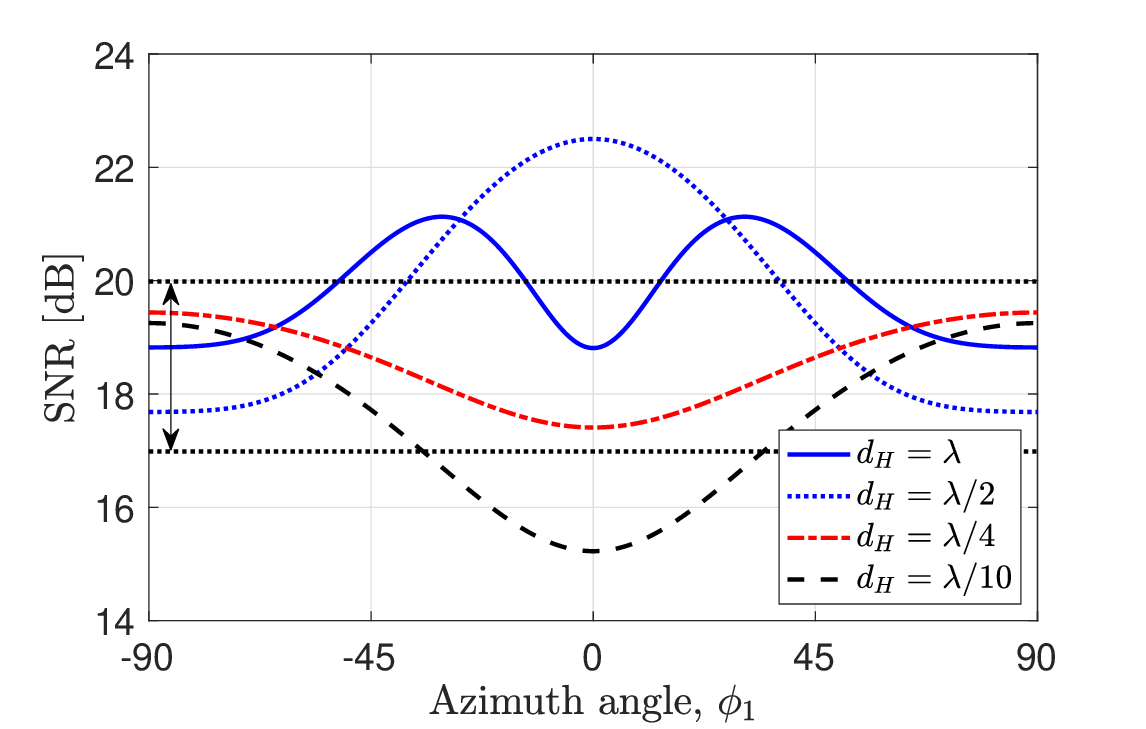}
 \put(16,29){\footnotesize $3$\,dB}
 \put(19,16){\footnotesize ${M_{\rm BS}}=1$}
      \put(23,19){\vector(1, 1){5}}
\end{overpic}
\caption{No noise matching network}
\label{fig:sec5_fig4c}
\end{subfigure}
\caption{ Uplink SNR of UE $1$ (in dB) with MMSE as $\phi_1$ varies with and without a noise matching network at the BS and ${M_{\rm BS}}=2$ antennas. The UE is located at a distance of $50$ meters and that the BS array is at an height of $10$ meters. Different values of $d_H$ are considered. The SNR with a single antenna is also reported as a benchmark.}\vspace{-0.5cm}
\label{fig:array_gain_vs_phi}
\end{figure}

Lemma 1 shows that an array gain greater than $2$ can only be obtained if $\mu \ne 0$. From Fig.~\ref{fig:mu}, we see that $\mu$ can be positive or negative depending on $d_H/\lambda$, and the first null is at $d_H/\lambda \approx 0.43$. Particularly, $\mu > 0$ for $d_H/\lambda < 0.43$ while negative values are observed for $0.43 < d_H/\lambda < 1$. This has an important impact on the direction of arrival $(\theta_k,\phi_k)$ corresponding to the maximum value of the array gain. 
From~\eqref{eq:chanGain_simplified_1} it can be observed that for a fixed value of $d_H/\lambda$, the maximum value of the array gain is achieved when $\mu > 0$ and corresponds to the minimum value of $\cos \psi_k$ for $(\theta_k,\phi_k)$. On the other hand, if $\mu < 0$ the maximum is achieved for $(\theta_k,\phi_k)$ corresponding to the maximum value of $\cos \psi_k$. 
Assume for example $d_H/\lambda < 0.43$, which means $0 \le 2 \pi d_H/\lambda < 0.86 \pi < \pi$. Since $\mu > 0$, the maximum array gain is attained when $\cos \psi_k$ is minimum, i.e., when $\cos(\theta_k)\sin(\phi_k)=\pm 1$. This condition requires $\theta_k=0$ and $\phi_k=\pm \pi/2$, which represents the \textit{end-fire} direction of arrival. The corresponding maximum array gain is given by 
\begin{equation}
{\rm Maximum \, Array \, Gain} = 2\dfrac{ 1 - \mu \cos(2 \pi d_H / \lambda)}{1 - \mu^2}.    
\end{equation}
If $0.43 < d_H/\lambda < 1$, then $\mu <0$ and the maximum array gain is achieved when $\cos \psi_k$ is maximum, i.e., when $\cos(\theta_k)\sin(\phi_k)=0$. This requires $\phi_k = 0$ or $\theta_k = \pm \pi/2$. In particular, $\theta_k = 0$ and $\phi_k = 0$ corresponds to the \textit{front-fire} direction of arrival. In this case, we obtain 
\begin{equation}
{\rm Maximum \, Array \, Gain} = \dfrac{ 2 }{1 + \mu}.
\end{equation}

\begin{figure}[t!]
\centering
\begin{overpic}[width = 1\columnwidth]{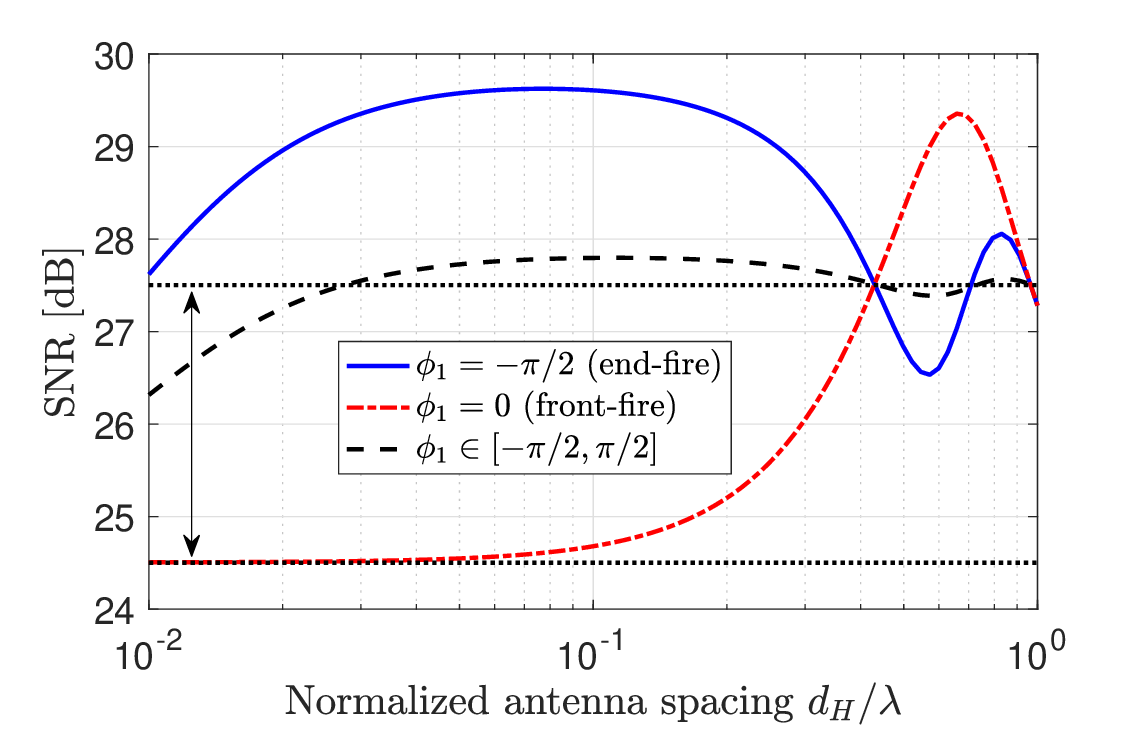}
 \put(18,25){\footnotesize $3$\,dB}
 \put(70,21){\footnotesize $M_{\rm BS}=1$}
      \put(74,20){\vector(1, -1){5}}
\end{overpic}
\caption{ Uplink SNR of UE $1$ (in dB) with MMSE with and without a noise matching network at the BS and $M_{\rm BS}=2$ antennas. The UE is located at a distance of $50$ meters and the BS array is at an height of $10$ meters. Different values of $\phi_1$ are considered. The SNR with a single antenna is also reported as a benchmark. \vspace{-0.5cm}}
\label{fig:SNR_spacing}
\end{figure}

Fig.~\ref{fig:sec5_fig4a} reports the SNR in dB for UE $1$ as a function of $\phi_1$ for different values of $d_H$ and with a \textit{full} matching network, i.e., ${\bf Z}_{\rm MR} = {\bf Z}^{\star}_{\rm MR}$. We assume that UE $1$ is located at a distance of $50$ meters and that the BS array is at an height of $10$ meters, which means $\theta_1 \approx -11^{\circ}$. The key parameters of the BS antenna array are reported in Table~\ref{tab:array_parameters}. For comparison, the SNR for the single-antenna case (i.e., $M_{\rm BS}=1$) is shown together with the line corresponding to an array gain of $3$ dB, i.e., the maximum array gain achievable with two uncoupled antennas. In agreement with the discussion above, the results of Fig.~\ref{fig:sec5_fig4a} show that, in the presence of a noise matching network, the array gain is maximum for $\phi_1 = \pm \pi/2$ (end-fire), when the antenna spacing $d_H$ is below $\lambda/4$ since $\mu>0$. On the contrary, it takes the maximum value for $\phi_1 = 0$ (front-fire) when $d_H=\lambda/2$ since $\mu<0$. For all the considered values of $d_H$, there exist ranges of $\phi_1$ for which the array gain is above $3$ dB. This proves that moving the antennas close to each other may have a positive effect that becomes negligible when $d_H$ is further reduced below $\lambda/10$. Interestingly, an array gain greater than $3$ dB can also be obtained for $d_H = \lambda/2$, when the transmitter is in front-fire. This is possible simply because $\mu \ne 0$ for $d_H = \lambda/2$, as shown in Fig.~\ref{fig:mu}.

The impact of matching network on the SNR when moving the antennas close to each other is illustrated in Fig.~\ref{fig:sec5_fig4b} and Fig.~\ref{fig:sec5_fig4c}, where we plot the SNR obtained with the self-impedance matching design (see Sect. II.K) and without a matching network. It can be observed that, for a fixed antenna spacing, the maxima and minima occur at the same values of $\phi_1$, regardless of the matching network design. However, the specific values of these maxima and minima are strongly influenced by the choice of the matching network. For instance, Fig.~\ref{fig:sec5_fig4b} demonstrates that reducing $d_H$ below $\lambda/4$ has a negative impact on both SNR and array gain. Furthermore, it is evident that the best performance, whether with a self-impedance matching network or without any matching network, is achieved when $d_H = \lambda/2$ and $\phi_1 = 0$.

\begin{figure}[t!]
\centering
	\begin{overpic}[width = 1.02\columnwidth]{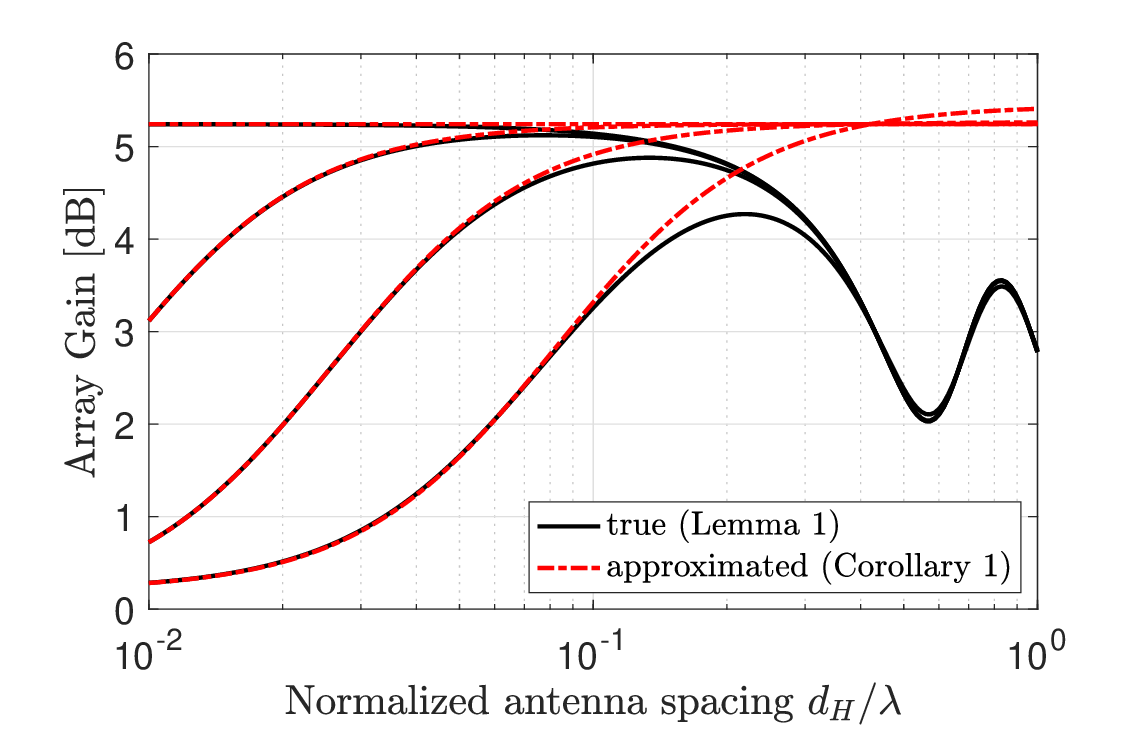}
 \put(15,55){\footnotesize $\frac{R_{\mathrm d}}{R_{\mathrm r}} = 0$}
	\put(15,34){\footnotesize $\frac{R_{\mathrm d}}{R_{\mathrm r}} = 10^{-3}$}
    \put(23,36){\vector(-1, 1){4}}
 \put(30,30){\footnotesize $\frac{R_{\mathrm d}}{R_{\mathrm r}} = 10^{-2}$}
 \put(38,32){\vector(-1, 1){4}}
 \put(45,25){\footnotesize $\frac{R_{\mathrm d}}{R_{\mathrm r}} = 10^{-1}$}
 \put(53,27){\vector(-1, 1){4}}
 \end{overpic}
\caption{ Array gain for different values of the dissipation resistance $R_{\rm d}$ when $\phi_1 = -\pi/2$ (end-fire). A noise matching network is used.}
\label{fig:ArrayGain_vs_Rd}
\end{figure}

To gain further insights into the effect of coupling as $\phi_1$ varies, Fig.~\ref{fig:SNR_spacing} plots the SNR of UE $1$ with a full noise matching network for $0.01 \le d_H/\lambda \le 1$. In particular, the black dashed curve has been obtained with $\phi_1$ uniformly distributed between $-\pi/2$ and $\pi/2$. The other parameters are the same as in Fig.~\ref{fig:sec5_fig4a}. The results are in agreement with those from Fig.~\ref{fig:sec5_fig4a}. Specifically, Fig.~\ref{fig:SNR_spacing} shows that in the presence of a noise matching network gains are achieved depending on the values of $\phi_1$. If $\phi_1$ is uniformly distributed between $-\pi/2$ and $\pi/2$, a minimal gain is achieved for $0.1 \le d_H/\lambda \le 1$ compared to uncoupled antennas. A loss is observed for small values of $d_H/\lambda$. This is a direct consequence of the dissipation resistance. To better understand this effect, the following corollary is given with $\mu_0 = \frac{R_{\rm r}}{R_{\rm r} + R_{\rm d}}$, $\mu_2 = \frac{\pi}{2} \frac{Z_0}{R_{\rm r} + R_{\rm d}}$ and $Z_0 = 377 \si{\ohm}$.
\begin{corollary} If ${d_{H}/\lambda \approx 0}$, then~\eqref{eq:chanGain_simplified_1} reduces to  
\begin{equation}\label{eq:asymp_array_gain}
2\frac{1 - \mu_0 + \left[2\mu_0\pi^{2} (\cos \theta_k \sin \phi_k)^{2} + \mu_2\right](d_{H}/\lambda)^{2}}{(1 + \mu_0)\left[1 - \mu_0 + \mu_2 (d_{H}/\lambda)^{2}\right]}.
\end{equation}
If ${d_{H}}/{\lambda} \to 0$, then~\eqref{eq:chanGain_simplified_1} tends to $\frac{2\mu_0\pi^{2} (\cos \theta_k \sin \phi_k)^{2} + \mu_2}{\mu_0\mu_2}$.
\end{corollary}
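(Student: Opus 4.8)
The plan is to Taylor-expand the exact array gain \eqref{eq:chanGain_simplified_1} to second order in the normalized spacing $d_H/\lambda$. Two ingredients are required. The first is elementary: from the definition of $\psi_k$ in \eqref{eq:varphi} one has $\cos(\psi_k) = 1 - 2\pi^2 g_k^2 (d_H/\lambda)^2 + O((d_H/\lambda)^4)$, where I write $g_k = \cos(\theta_k)\sin(\phi_k)$. The second ingredient, which is the crux, is the small-spacing expansion of the normalized coupling $\mu$ defined in \eqref{Z_AR_2_2}, namely the claim $\mu = \mu_0 - \mu_2(d_H/\lambda)^2 + O((d_H/\lambda)^4)$. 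Everything else is substitution and grouping.

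To obtain the expansion of $\mu = R_{21}/(R_{\rm r}+R_{\rm d})$, where $R_{21}=\re\{[\mathbf{Z}_{\rm AR}^{\rm ul}]_{12}\}$ is the mutual resistance of two side-by-side $\lambda/2$-dipoles, I would start from the closed form \cite[Eq. (8.71a)]{balanis}, $R_{21}=\frac{Z_0}{4\pi}\left[2\,\mathrm{Ci}(u_0)-\mathrm{Ci}(u_1)-\mathrm{Ci}(u_2)\right]$ with $u_0 = k d_H$ and $u_{1,2}=k(\sqrt{d_H^2+(\lambda/2)^2}\pm\lambda/2)$. As $d_H\to 0$ one has $u_0 = 2\pi(d_H/\lambda)\to 0$ and $u_2 = 2\pi(d_H/\lambda)^2(1+O((d_H/\lambda)^2))\to 0$, so that both $\mathrm{Ci}(u_0)$ and $\mathrm{Ci}(u_2)$ carry logarithmic singularities through $\mathrm{Ci}(x)=\gamma+\ln x - x^2/4 + O(x^4)$. \textbf{The delicate and decisive step is to show that these logarithms cancel}: the combination $2\ln u_0 - \ln u_2$ is finite, and collecting the surviving constants leaves $\gamma+\ln(2\pi)-\mathrm{Ci}(2\pi)$, which is precisely $4\pi R_{\rm r}/Z_0$. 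This recovers $\mu(0)=R_{\rm r}/(R_{\rm r}+R_{\rm d})=\mu_0$ and is the main obstacle of the proof; the reactance (Si-based) terms never enter because $\mu$ depends only on $\re\{\mathbf{Z}_{\rm AR}^{\rm ul}\}$.

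Collecting the quadratic terms is then bookkeeping: the $-x^2/4$ term of $\mathrm{Ci}(u_0)$ contributes $-2\pi^2(d_H/\lambda)^2$, while the expansion of $\mathrm{Ci}(u_1)$ about $2\pi$ (using $\mathrm{Ci}'(2\pi)=\cos(2\pi)/(2\pi)=1/(2\pi)$) and the subleading $\ln(1-(d_H/\lambda)^2)$ piece of $\mathrm{Ci}(u_2)$ contribute canceling $\pm(d_H/\lambda)^2$ terms. The net result is $R_{21}=R_{\rm r}-\frac{\pi Z_0}{2}(d_H/\lambda)^2+O((d_H/\lambda)^4)$, so that $\mu=\mu_0-\mu_2(d_H/\lambda)^2$ with $\mu_2=\frac{\pi}{2}\frac{Z_0}{R_{\rm r}+R_{\rm d}}$, exactly as defined before the statement.

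It remains to assemble the gain. Substituting both expansions into \eqref{eq:chanGain_simplified_1}, the numerator $1-\mu\cos(\psi_k)$ becomes $(1-\mu_0)+\left[2\mu_0\pi^2 g_k^2+\mu_2\right](d_H/\lambda)^2$, while for the denominator I factor $1-\mu^2=(1-\mu)(1+\mu)$ and approximate $1+\mu\approx 1+\mu_0$, whose $O((d_H/\lambda)^2)$ correction is negligible against the $O(1)$ factor, giving $(1+\mu_0)\left[(1-\mu_0)+\mu_2(d_H/\lambda)^2\right]$; this yields \eqref{eq:asymp_array_gain}. For the final limit, observe that in the lossless regime $R_{\rm d}=0$ one has $\mu_0=1$, hence $1-\mu_0=0$ and the $(d_H/\lambda)^2$ factors cancel between numerator and denominator, leaving the spacing-independent value $\frac{2\mu_0\pi^2 g_k^2+\mu_2}{\mu_0\mu_2}$ evaluated at $\mu_0=1$. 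More generally this is the ratio of the quadratic coefficients, which governs the behaviour of \eqref{eq:asymp_array_gain} as $d_H/\lambda\to 0$ once the constant terms $1-\mu_0\propto R_{\rm d}$ are dominated by the quadratic ones. I would state this limit explicitly in the lossless case to avoid the order-of-limits subtlety between $R_{\rm d}\to 0$ and $d_H/\lambda\to 0$.
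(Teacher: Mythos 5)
Your proof is correct and follows essentially the same route as the paper: Taylor-expand $\cos(\psi_k)$ and $\mu$ to second order in $d_H/\lambda$ and substitute into \eqref{eq:chanGain_simplified_1}. The only difference is that you explicitly derive the expansion $\mu \approx \mu_0 - \mu_2 (d_H/\lambda)^2$ from the closed-form mutual resistance (including the cancellation of the logarithmic singularities of the cosine-integral terms), which the paper simply asserts, and you correctly flag that the stated limit $\frac{2\mu_0\pi^{2}(\cos\theta_k\sin\phi_k)^{2}+\mu_2}{\mu_0\mu_2}$ is obtained from \eqref{eq:asymp_array_gain} only when $1-\mu_0$ is negligible (i.e., $R_{\rm d}\to 0$, $\mu_0\to 1$), a subtlety the paper glosses over.
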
 
\begin{IEEEproof}
With half-wavelength dipoles in side-by-side configuration, $\mu$ can be approximated as $\mu \approx \mu_{0}-\mu_{2} (d_{H}/\lambda)^{2}$ as $ d_{H}/\lambda \approx 0$. 
The Taylor expansion of $\cos \left(\psi_k\right)$ for $ d_{H}/\lambda \approx 0$ is $1-2 \pi^{2} (\cos\theta_k \sin\phi_k)^{2} (d_{H}/\lambda)^{2}$. Plugging these expressions into~\eqref{eq:chanGain_simplified_1} yields $\eqref{eq:asymp_array_gain}$ from which the asymptotic value for ${d_{H}}/{\lambda} \to 0$ follows.
\end{IEEEproof}

Fig.~\ref{fig:ArrayGain_vs_Rd} depicts the variations of~\eqref{eq:chanGain_simplified_1} and~\eqref{eq:asymp_array_gain} with respect to $d_H/\lambda$ for different values of the ratio $R_{\rm d}/R_{\rm r}$. The parameter $\phi_1$ is fixed at $-\pi/2$. The figure demonstrates the significant impact of $R_{\rm d}$ on the array gain. Specifically, as $R_{\rm d}/R_{\rm r}$ increases, the maximum array gain occurs at larger values of $d_H/\lambda$, while poor performance is observed when $d_H/\lambda$ approaches $0$.



\subsection{Interference gain}
The mutual coupling between antennas has also an impact on the interference term $\left|\mathbf{u}_k^{\Htran} \mathbf{h}_i^{\rm ul}\right|^2$ in~\eqref{eq:sinr_ul}. To show this, the following result is given for MR, i.e., $\mathbf{u}_k = \mathbf{h}_k^{\rm ul}$.


\begin{lemma} \label{lemma_3}Consider the uplink with MR and assume that $M_{\rm BS}=2$. If a full matching network is used at the BS, then in a single path LoS propagation scenario the interference gain (compared to a single antenna BS) between UEs $k$ and $i$ is
\begin{equation}
\begin{split}
{\rm Interf. \, Gain}  &= \dfrac{2\left[\cos(\frac{\psi_{k}-\psi_{i}}{2}) -\mu \cos(\frac{\psi_{k}+\psi_{i}}{2}\right]^2 }{(1-\mu \cos \psi_{k} )(1-\mu^{2})}\label{Gamma}
\end{split}
\end{equation}
\end{lemma}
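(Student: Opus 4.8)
The plan is to evaluate the MR interference term $|\mathbf{u}_k^{\Htran}\mathbf{h}_i^{\rm ul}|^2$ in closed form and then normalize it by its single-antenna counterpart. With MR we have $\mathbf{u}_k=\mathbf{h}_k^{\rm ul}/\|\mathbf{h}_k^{\rm ul}\|$, so the object to compute is $|(\mathbf{h}_k^{\rm ul})^{\Htran}\mathbf{h}_i^{\rm ul}|^2/\|\mathbf{h}_k^{\rm ul}\|^2$. Since a full matching network is assumed, I would begin from the closed form \eqref{d_i_ul_MR}, i.e. $\mathbf{h}_k^{\rm ul}=\mathbf{d}_k^{\rm ul}=\xi_{\rm ul}\re\{\mathbf{Z}_{\rm AR}^{\rm ul}\}^{-1/2}\mathbf{z}_{{\rm ART},k}^{\rm ul}$ with $\mathbf{z}_{{\rm ART},k}^{\rm ul}=\alpha(\theta_k,\phi_k)\mathbf{a}(\theta_k,\phi_k)$. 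Because $\re\{\mathbf{Z}_{\rm AR}^{\rm ul}\}$ is real, symmetric and positive definite, its square root is Hermitian and all inner products collapse onto the single matrix $\mathbf{R}^{-1}:=\re\{\mathbf{Z}_{\rm AR}^{\rm ul}\}^{-1}$; writing $\mathbf{a}_k=\mathbf{a}(\theta_k,\phi_k)$, this gives $(\mathbf{h}_k^{\rm ul})^{\Htran}\mathbf{h}_i^{\rm ul}=\xi_{\rm ul}^2\,\alpha^*(\theta_k,\phi_k)\,\alpha(\theta_i,\phi_i)\,\mathbf{a}_k^{\Htran}\mathbf{R}^{-1}\mathbf{a}_i$.

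I would then invert the $2\times2$ matrix in \eqref{Z_AR_2_2}, obtaining $\mathbf{R}^{-1}=\frac{1}{(R_{\rm r}+R_{\rm d})(1-\mu^2)}\left[\begin{smallmatrix}1&-\mu\\-\mu&1\end{smallmatrix}\right]$, and represent each steering vector in the reduced form $\mathbf{a}_k=[1,e^{\imagunit\psi_k}]^{\Ttran}$. The common per-direction phase $e^{\imagunit\mathbf{k}(\theta_k,\phi_k)^{\Ttran}\mathbf{u}_1}$ dropped here is harmless, since only the moduli of the inner products survive in the final ratio. A direct multiplication yields the denominator $\mathbf{a}_k^{\Htran}\mathbf{R}^{-1}\mathbf{a}_k=\frac{2(1-\mu\cos\psi_k)}{(R_{\rm r}+R_{\rm d})(1-\mu^2)}$, which up to the factor $1/(R_{\rm r}+R_{\rm d})$ is exactly the array-gain quantity of Lemma 1 (cf.\ \eqref{eq:chanGain_simplified_1}), together with the numerator $\mathbf{a}_k^{\Htran}\mathbf{R}^{-1}\mathbf{a}_i=\frac{1}{(R_{\rm r}+R_{\rm d})(1-\mu^2)}\bigl(1+e^{\imagunit(\psi_i-\psi_k)}-\mu e^{\imagunit\psi_i}-\mu e^{-\imagunit\psi_k}\bigr)$.

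The main obstacle is squaring the modulus of this numerator, and I would tame it by grouping the bracket as $A-\mu B$ with $A=1+e^{\imagunit(\psi_i-\psi_k)}$ and $B=e^{\imagunit\psi_i}+e^{-\imagunit\psi_k}$, so that $|A-\mu B|^2=|A|^2+\mu^2|B|^2-2\mu\,\re(AB^*)$. A short computation gives $|A|^2=2(1+\cos(\psi_i-\psi_k))$, $|B|^2=2(1+\cos(\psi_i+\psi_k))$, and --- the crucial simplification --- the cross term $AB^*=2(\cos\psi_i+\cos\psi_k)$ turns out to be real. This real-valuedness is the one place where the bookkeeping of complex exponentials must be carried out carefully, and it is exactly what makes the final expression collapse to a clean cosine form.

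Finally I would assemble the ratio. Combining numerator and denominator gives $|\mathbf{u}_k^{\Htran}\mathbf{h}_i^{\rm ul}|^2=\xi_{\rm ul}^2|\alpha(\theta_i,\phi_i)|^2\,\frac{|A-\mu B|^2}{2(1-\mu\cos\psi_k)(R_{\rm r}+R_{\rm d})(1-\mu^2)}$, while the single-antenna reference ($M_{\rm BS}=1$), where $\mathbf{a}_k$ is a unit-modulus scalar, reduces to $|\mathbf{h}_i^{\rm ul}|^2=\xi_{\rm ul}^2|\alpha(\theta_i,\phi_i)|^2/(R_{\rm r}+R_{\rm d})$. Dividing the two cancels $\xi_{\rm ul}^2$, $|\alpha(\theta_i,\phi_i)|^2$ and $R_{\rm r}+R_{\rm d}$, leaving $\frac{|A-\mu B|^2}{2(1-\mu\cos\psi_k)(1-\mu^2)}$; substituting the three expanded pieces, halving $|A-\mu B|^2$, and using $\cos(\psi_i-\psi_k)=\cos(\psi_k-\psi_i)$ and $\cos(\psi_i+\psi_k)=\cos(\psi_k+\psi_i)$ reproduces \eqref{Gamma}.
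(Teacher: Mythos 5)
Your proposal is correct and follows essentially the same route as the paper: it starts from \eqref{d_i_ul_MR} to reduce the MR interference term to the ratio $|{\bf a}_k^{\Htran}\re\{{\bf Z}_{\rm AR}^{\rm ul}\}^{-1}{\bf a}_i|^2/({\bf a}_k^{\Htran}\re\{{\bf Z}_{\rm AR}^{\rm ul}\}^{-1}{\bf a}_k)$, inverts the $2\times2$ matrix in \eqref{Z_AR_2_2}, and normalizes by the single-antenna value. The only difference is that you spell out the modulus-squared expansion via the $A-\mu B$ grouping, which the paper states without derivation; your intermediate expressions agree with the paper's.
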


\begin{IEEEproof}
With MR and full matching networks, the normalized interference term, as it follows from \eqref{d_i_ul_MR}, is 
\begin{equation}
\label{Interf1}
\frac{p_i\left|\mathbf{u}_k^{\Htran} \mathbf{h}_i^{\rm ul}\right|^2}
  {\mathbf{u}_k^{\Htran}\mathbf{R}_n^{\rm ul}\mathbf{u}_k} = A'\dfrac{|{\bf a}(\theta_{k},\phi_{k})^{H} \re\{{\bf Z}_{{\rm AR}}^{\rm ul}\}^{-1} {\bf a}(\theta_{i},\phi_{i})|^{2}}{{\bf a}(\theta_{k},\phi_{k})^{H} \re\{{\bf Z}_{{\rm AR}}^{\rm ul}\}^{-1} {\bf a}(\theta_{k},\phi_{k})}
\end{equation}
with $A' = \frac{p_i|\alpha'(\theta_{i},\phi_{i},r_i)|^2}{\sigma^2} \frac{|\xi_{\rm ul}|^2 |Z_0|^2 |Z_{\rm L}+Z_{\rm opt}|^2}{|Z_{\rm L}|^2} $. From \eqref{Z_AR_2_2}, $|{\bf a}(\theta_{k},\phi_{k})^{H} \re\{{\bf Z}_{{\rm AR}}^{\rm ul}\}^{-1} {\bf a}(\theta_{i},\phi_{i})|^2$ is obtained as
\begin{equation}
\begin{split}
\dfrac{4 \left[\cos(\psi_{k}/2-\psi_{i}/2) -\mu \cos(\psi_{k}/2+\psi_{i}/2)\right]^2}{(R_{r}+R_{d})^{2}(1-\mu^{2})^{2}}
\end{split}
\end{equation}
whereas ${\bf a}(\theta_{k},\phi_{k})^{H} \re\{{\bf Z}_{{\rm AR}}^{\rm ul}\}^{-1} {\bf a}(\theta_{k},\phi_{k})$ is given by
\begin{equation}
\frac{2}{R_{\rm r}+ R_{\rm d}}\frac{1 - \mu\cos \left(\psi_k\right)}{1 - \mu^2}.
\end{equation}
The result in \eqref{Gamma} follows after normalization with $A'/(R_{\rm r}+ R_{\rm d})$. 
\end{IEEEproof}

\begin{figure}
\centering
\begin{subfigure}{.5\textwidth}
  \centering
\begin{overpic}[width = \columnwidth]{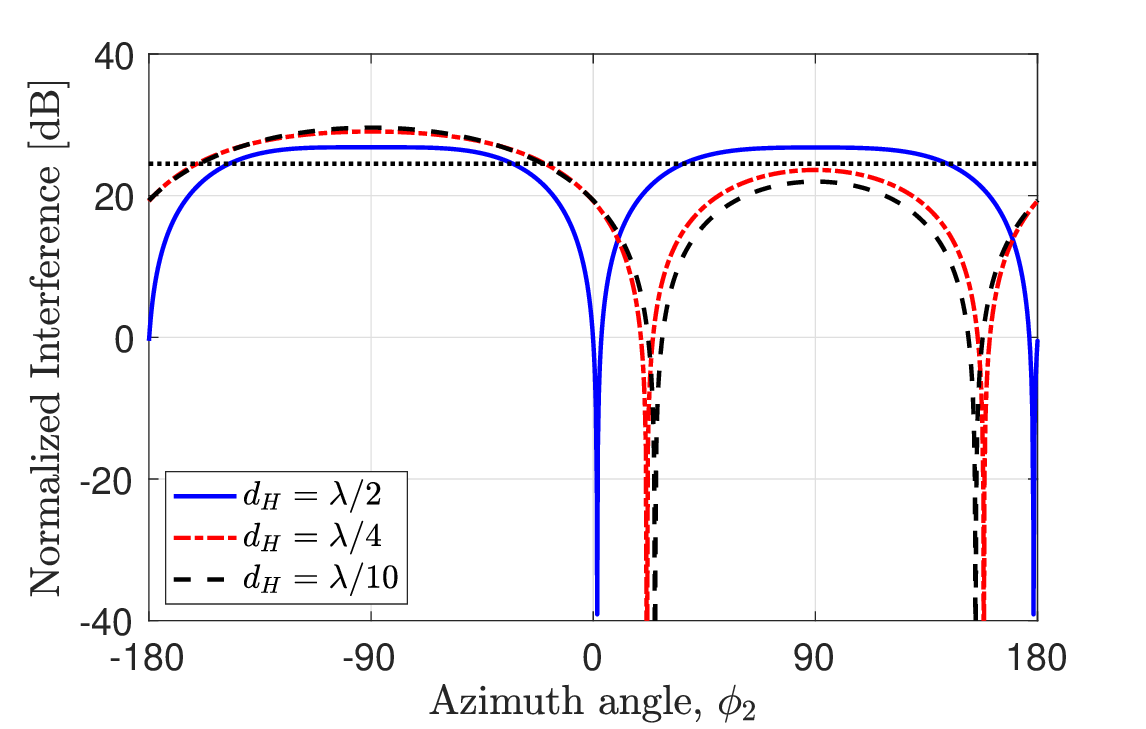}
 \put(19,42){\footnotesize $M_{\rm BS}=1$}
     \put(23,44){\vector(1, 1){5}}
\end{overpic}
\caption{$\phi_1 = -\pi/2$ (end-fire)}
\label{fig:sec5_fig6a}
\end{subfigure}
\begin{subfigure}{.5\textwidth}
  \centering
\begin{overpic}[width = \columnwidth]{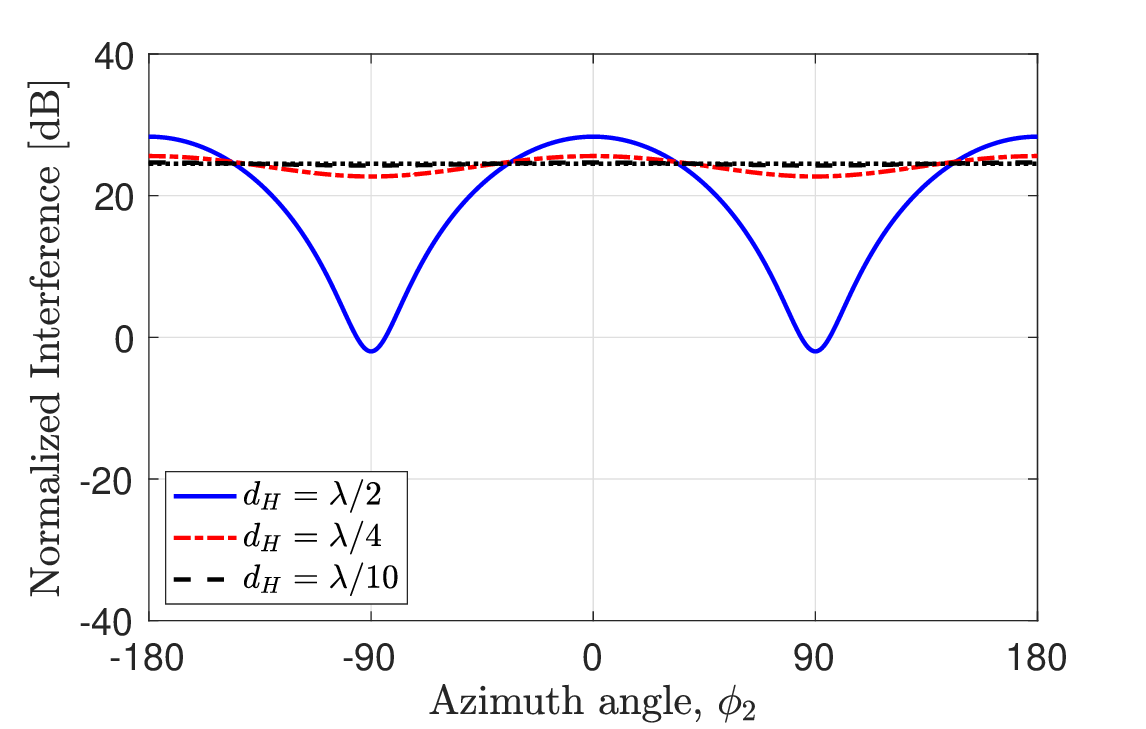}
 \put(19,55){\footnotesize $M_{\rm BS}=1$}
      \put(23,55){\vector(1, -1){5}}
\end{overpic}
\caption{$\phi_1 = 0$ (front-fire)}
\label{fig:sec5_fig6c}
\end{subfigure}
\caption{ Normalized interference of UE $1$ (in dB) with MR as $\phi_2$ varies with a full noise matching network at the BS and $M_{\rm BS}=2$ antennas. Different values of $d_H$ are considered. The single antenna case is also reported as a benchmark.}\vspace{-0.5 cm}
\label{fig:Interference_vs_phi_M2K2}
\end{figure}

Similarly to the array gain, the interference gain is also influenced by the parameters ${d_H, (\theta_k, \phi_k)}$ and ${d_H, (\theta_i, \phi_i)}$ through $\psi_k$ and $\psi_i$. The expression for the interference gain is more complex, making it challenging to gain direct insights into the interplay of these parameters. However, the numerical results shown in Fig.~\ref{fig:Interference_vs_phi_M2K2} reveal that the coupling effects observed with densely spaced antennas can either enhance or hinder the interference rejection capabilities of MR (but the same considerations apply to MMSE), depending on the directions of arrival of the interfering signal. For example, assuming that UE 1 is in end-fire (as in Fig.~\ref{fig:sec5_fig6a}) the best performance is observed with $\phi_2=0^{\circ}$ when $d_H = \lambda/2$ and for $\phi_2 \approx 24^{\circ}$ when $d_H = \lambda/10$. It is worth observing that, when $\phi_1 = 0^{\circ}$, moving the antennas close to each other has minimal effects on interference rejection, as shown in Fig.~\ref{fig:sec5_fig6c}. In this case, the best performance is obtained with $d_H = \lambda/2$ and $\phi_2 \pm 90^{\circ}$. 

\begin{figure}
\centering
\begin{subfigure}{.5\textwidth}
  \centering
\begin{overpic}[width = \columnwidth]{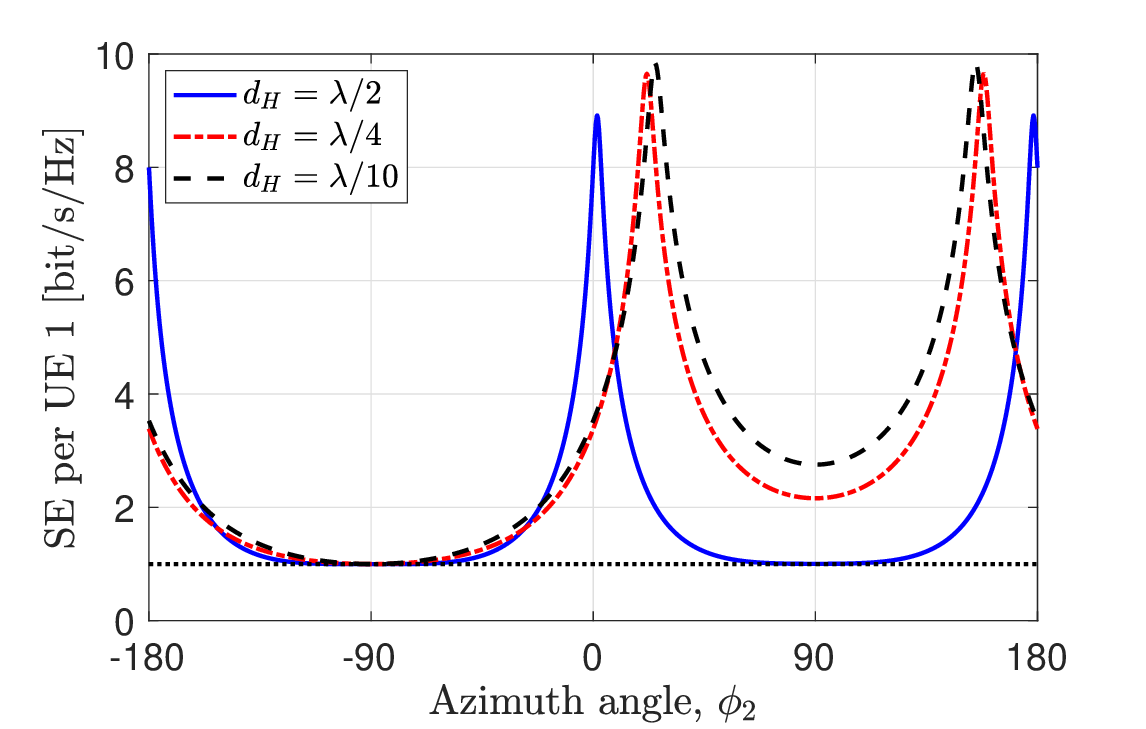}
 \put(23,21){\footnotesize $M_{\rm BS}=1$}
      \put(27,20){\vector(1, -1){5}}
\end{overpic}
\caption{$\phi_1 = -\pi/2$ (end-fire)}
\label{fig:sec5_fig6b}
\end{subfigure}
\hspace{0.2cm}
\begin{subfigure}{.5\textwidth}
  \centering
\begin{overpic}[width = \columnwidth]{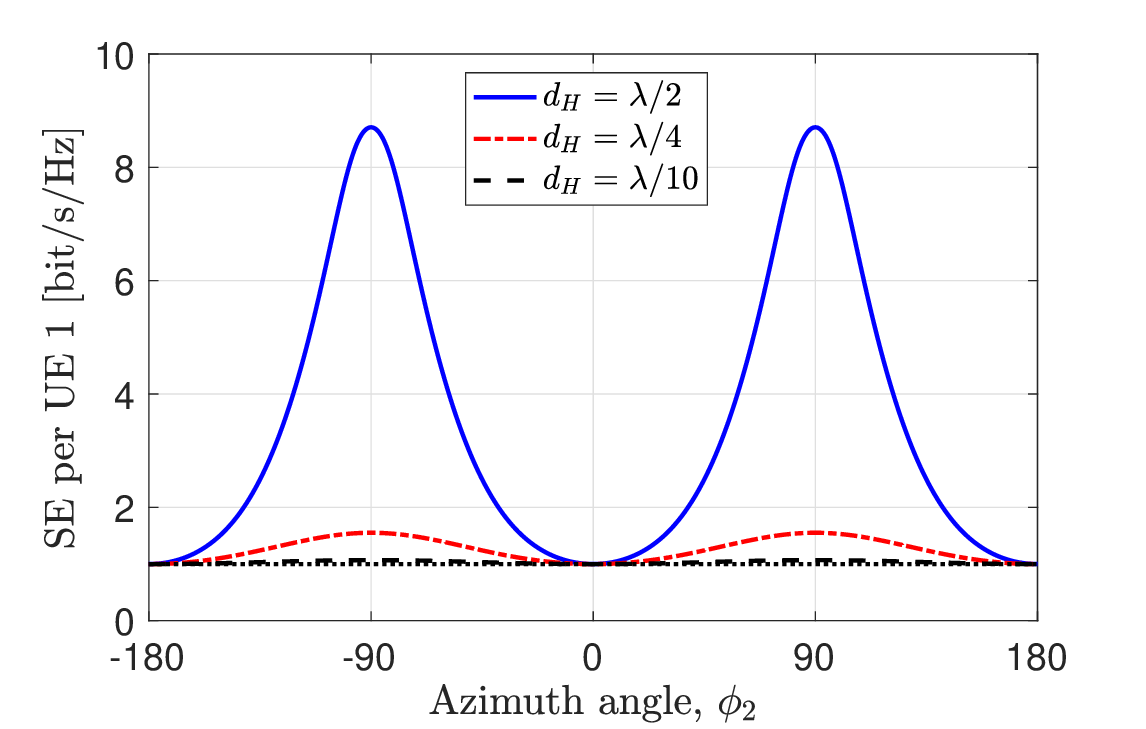}
 \put(23,21){\footnotesize $M_{\rm BS}=1$}
      \put(27,20){\vector(1, -1){5}}
\end{overpic}
\caption{$\phi_1 = 0$ (front-fire)}
\label{fig:sec5_fig6d}
\end{subfigure}
\caption{ SE of UE $1$ (in dB) with MR as $\phi_2$ varies with a full noise matching network at the BS, $M_{\rm BS}=2$ antennas, and different values of $d_H$. The single antenna case is also reported.\vspace{-0.5cm}}
\label{fig:SE_vs_phi_M2K2}
\end{figure}

\subsection{Spectral efficiency}
Both array and interference gains contribute to the overall SINR and ultimately impact the spectral efficiency of the different UEs. To quantify this, Fig.~\ref{fig:SE_vs_phi_M2K2} plots 
the spectral efficiency  of UE 1 as a function of $\phi_2$ in the same simulation scenario of Fig.~\ref{fig:Interference_vs_phi_M2K2}. The single antenna case is also reported as a benchmark. It is assumed the same transmit power for the two users. The results in Figs.~\ref{fig:sec5_fig6b}-\ref{fig:sec5_fig6d} can easily be explained with those in Fig.~\ref{fig:SNR_spacing} and Figs.~\ref{fig:sec5_fig6a}-\ref{fig:sec5_fig6c}. In particular, as expected, the points of minimum/maximum in Figs.~\ref{fig:sec5_fig6b} and \ref{fig:sec5_fig6d} correspond to the points of maximum/minimum in Figs.~\ref{fig:sec5_fig6a} and \ref{fig:sec5_fig6c}. When $\phi_1 = -90^{\circ}$ (as in Fig.~\ref{fig:sec5_fig6b}), the maximum SE (more than four times larger compared to the single antenna case) is achieved with $d_H = \lambda/10$ (when $\phi_2 \approx 24^{\circ}$) but significant gains are also observed for $d_H = \lambda/2$ when $\phi_2 = 0^{\circ}$. On the other hand, when $\phi_1 = 0^{\circ}$ poor performance is obtained by moving the antennas close to each other, and $d_H=\lambda/2$ is the best option.

Fig.~\ref{fig:Uplni_SE_with_K_2} plots the average SE per UE with different matching networks at BS: \textcolor{blue}{\emph{i}) Full Matching Network (Full MN); \emph{ii}) Self-Impedance Matching Network (SI MN); and \emph{iii}) No Matching Network (No MN).} 
We assume that the angles of arrival $\phi_1$ and $\phi_2$ for the two UEs are within the range $[-\pi/2, \pi/2]$ and that both UEs are located at a distance of $50$ meters.
The results indicate that decreasing $d_H/\lambda$ has a detrimental impact on the spectral efficiency, regardless of the matching network employed. Additionally, it is evident that the best performance is achieved when $d_H/\lambda \geq 0.5$, meaning that there is no significant advantage in using a full matching network compared to the self-impedance matching design. As anticipated, a considerable reduction in spectral efficiency is observed when no noise matching network is utilized.

\begin{figure}[t!]
\centering
\includegraphics[width = 1.02\columnwidth]{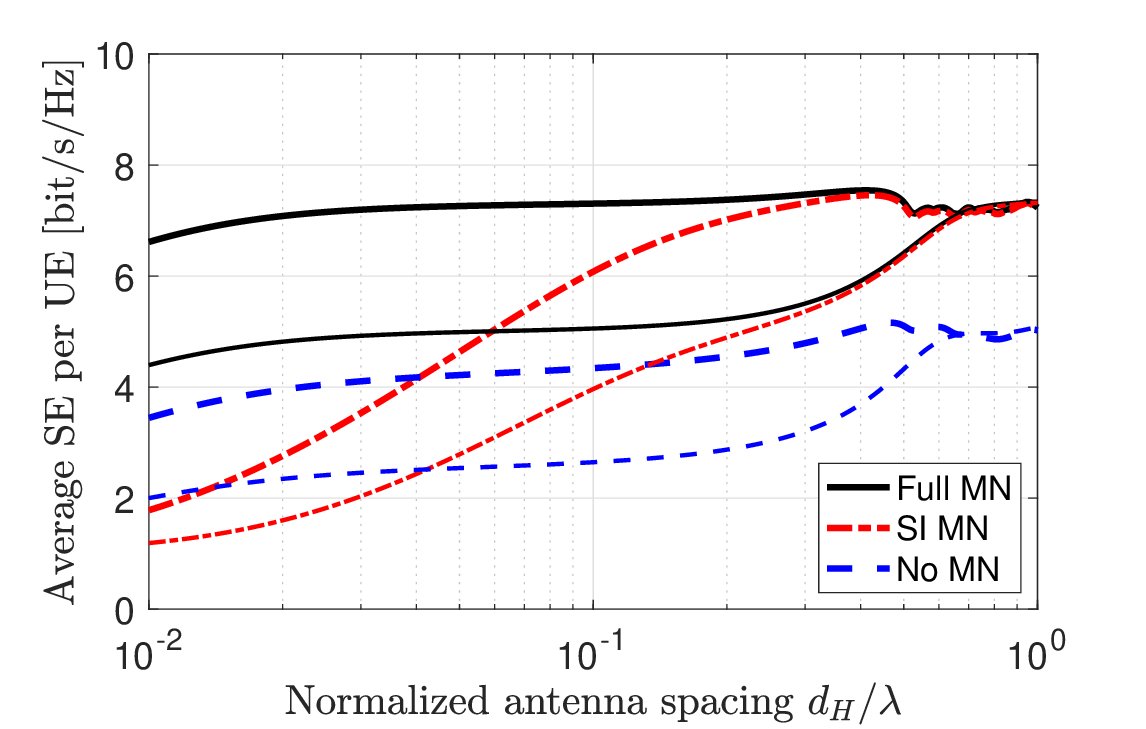}
\caption{ Average uplink SE per UE with MMSE, $M_{\rm BS} = 2$ antennas and $K=2$. The ticker lines are obtained with $\phi_1,\phi_2 \in [-\pi/2,\pi/2]$, while the others $\phi_1,\phi_2 \in [-\pi,0]$. Different noise matching networks at the BS are considered. \vspace{-0.5cm}}
\label{fig:Uplni_SE_with_K_2}
\end{figure}




\begin{figure}
\centering
\begin{subfigure}{.5\textwidth}
\begin{overpic}[width = \columnwidth]{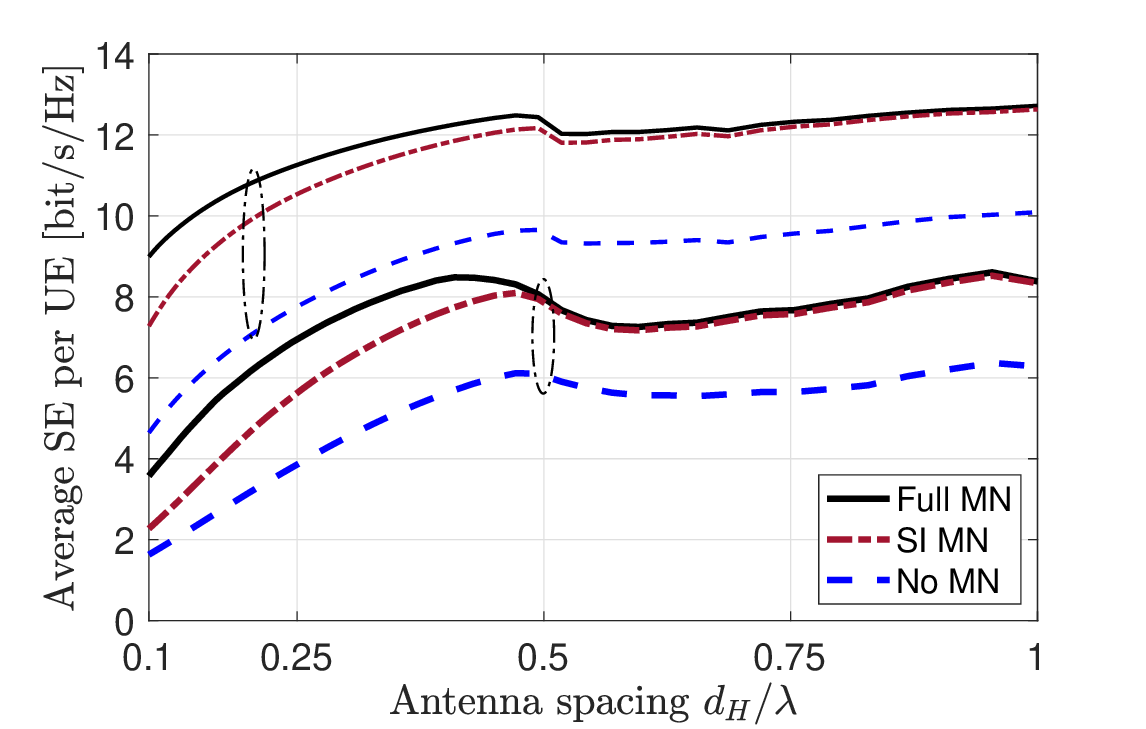}
 \put(14,54){\footnotesize $M_{\rm BS}=64$}
      \put(17,53){\vector(1, -1){4}}
       \put(49,23){\footnotesize $M_{\rm BS}=16$}
      \put(52,26){\vector(-1, 1){4}}
\end{overpic}
\caption{$K=10$ and $M_{\rm BS}=16,64$}
\label{fig:SE_vs_dH_K10_M16,64}
\end{subfigure}
\begin{subfigure}{.5\textwidth}
  \centering
\begin{overpic}[width = \columnwidth]{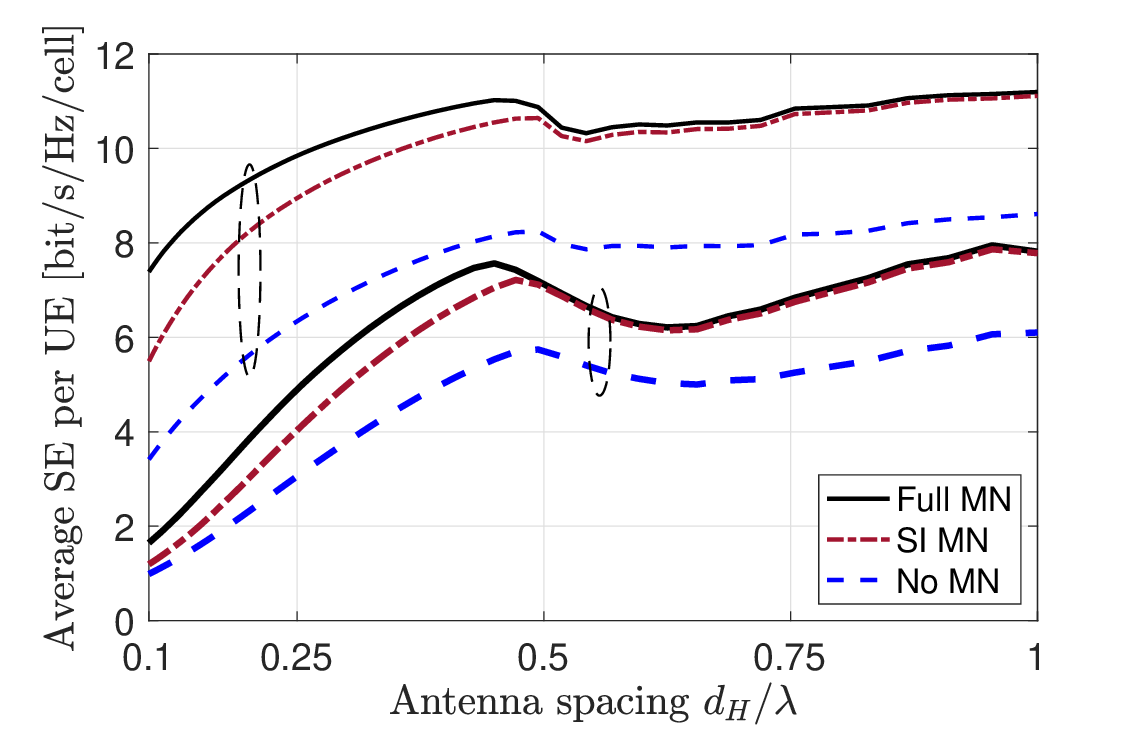}
 \put(14,54){\footnotesize $K=8$}
      \put(17,53){\vector(1, -1){4}}
       \put(54,22){\footnotesize $K=24$}
      \put(57,26){\vector(-1, 1){4}}
\end{overpic}
\caption{$M_{\rm BS}=32$ and $K=8,24$}
\label{fig:SE_vs_dH_M32_K8,24}
\end{subfigure}
\caption{ SE per UE with MMSE and azimuth angles $\{\phi_k; k =1,\ldots,K\}$ of UEs uniformly distributed within the sector $ [-\pi/2, \pi/2]$. The thicker lines correspond to $M_{\rm BS}=16$ in Fig.~\ref{fig:NMSE}a and to $K=24$ in Fig.~\ref{fig:NMSE}b.}
\label{fig:NMSE}
\end{figure}
\section{Numerical Analysis}
The analysis presented above highlights that the mutual coupling effects resulting from closely spaced antennas can potentially provide benefits to the uplink spectral efficiency in single-user and multi-user Holographic MIMO systems, depending on the specific propagation conditions and \textcolor{blue}{impedance matching networks used}.
The analysis focused on a simplified uplink case study with two antennas and two UEs. Next, the numerical analysis is expanded to more realistic scenarios, including a larger number of antennas, \textcolor{blue}{arranged side-by-side in a uniform linear array}, and multiple UEs. Additionally, the analysis considers the case of densely packed antennas in a space-constrained form factor. By exploring these scenarios, a more comprehensive understanding of the benefits of mutual coupling in Holographic MIMO systems can be obtained.


The system parameters are those reported in Table~\ref{tab:array_parameters}. \textcolor{blue}{We consider a scenario with single-path LoS propagation and model the wireless channel as in \eqref{eq:d_OC}.} The BS is positioned at a height of $10$~m. The azimuth angle of each UE is randomly distributed within the sector $[-\pi/2, \pi/2]$ while the elevation angle depends on the distance from the BS. UEs are randomly dropped at a minimum distance of $15$~m and a maximum distance of $150$~m from the BS, and they transmit with the same power. The results are obtained by averaging over $1000$ UE drops. \textcolor{blue}{While we assume impedance matching is consistently applied at the UE, the three different case, namely, Full Matching Network (Full MN), Self-Impedance Matching Network (SI MN) and No Matching Network (No MN), are considered for the BS.}

Due to space limitations, our main emphasis is on the uplink but we put a specific focus on addressing the duality implication in the downlink. Although, we focus LoS propagation, similar results can be obtained with different channel models, e.g., based on stochastic approaches.

\subsection{Fixing the Number of Antennas while Varying Array Size}

Fig.~\ref{fig:SE_vs_dH_K10_M16,64} illustrates the average SE per UE in the uplink as a function of $d_H/\lambda$ for two different antenna configurations: $M_{\rm BS}=16$ and $M_{\rm BS}=64$, with a fixed number of UEs, $K=10$. The results show that, \textit{when the number of antennas is fixed, reducing the antenna spacing generally has a negative impact on the average SE}. Better performance is observed for $d_H/\lambda > 0.5$. In this range, employing a full matching network yields only a marginal gain compared to the self-impedance matching design. However, a significant decrease in SE occurs when no matching network is utilized. Additionally, as expected, increasing the number of antennas results in higher SE due to enhanced interference rejection capabilities. Similar conclusions can be drawn from Fig.~\ref{fig:SE_vs_dH_M32_K8,24}, where the number of antennas is fixed at $M_{\rm BS}=32$, while the number of UEs is varied between $K=10$ and $K=30$.
The curves in Fig.~\ref{fig:SE_vs_dH_K10_M16,64} were obtained with $R_{\rm d} =10^{-3} R_{\rm r}$. Numerical results (not reported for space limitations) show that the SE worsens as $R_{\rm d}$ increases but similar behaviors can be observed. We also notice that the effect of $R_{\rm d}$ is more significant for $d_H < \lambda/2$ while a marginal impact is observed for large antenna spacings (i.e, $d_H > \lambda/2$).

\begin{figure}
\centering
\begin{overpic}[width = 1\columnwidth]{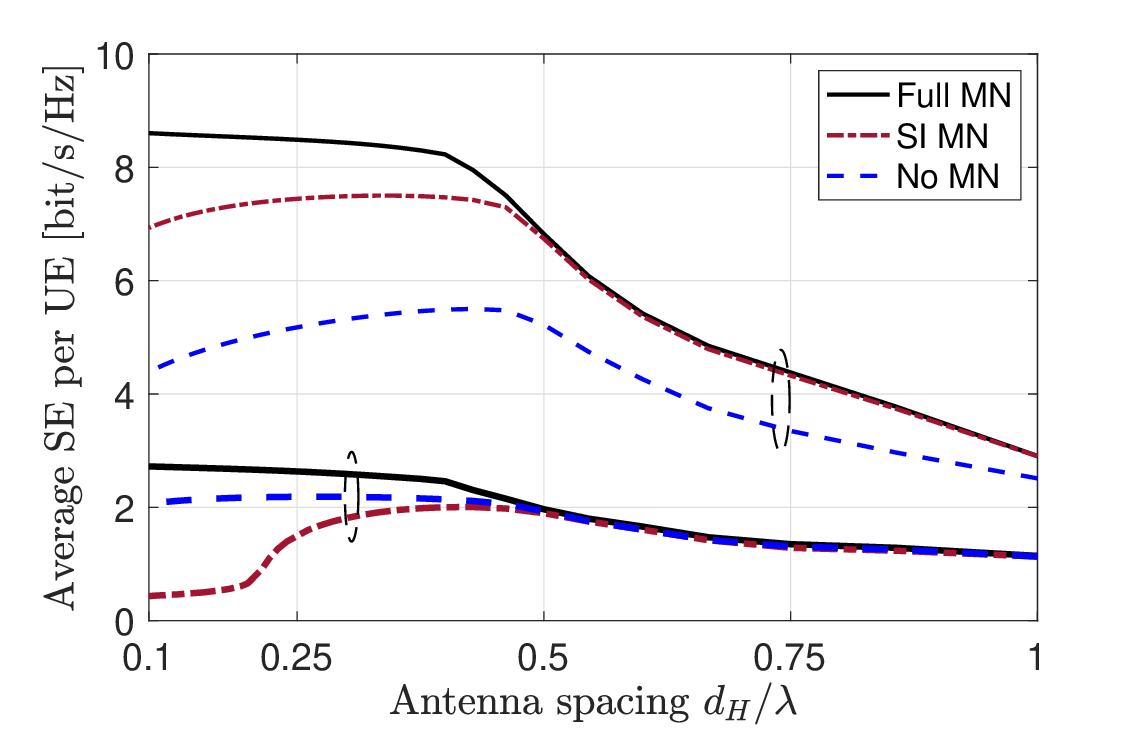}
 \put(71,39){\footnotesize MMSE}
      \put(73,38){\vector(-1, -1){4}}
       \put(33,10){\footnotesize MR}
      \put(35,13){\vector(-1, 1){4}}
\end{overpic}
\caption{  Average SE per UE when the ULA has a size of $6\lambda$. The number of UEs is $K=10$ with $\theta_k \in [-\pi/2,\pi/2]$ for $k=1,\ldots,K$. The thicker lines are obtained with MR combining while the other ones with MMSE.}
\label{fig:NMSE_11}\vspace{-0.3cm}
\end{figure}

\begin{figure}[t!]
\centering
\begin{overpic}[width = 1\columnwidth]{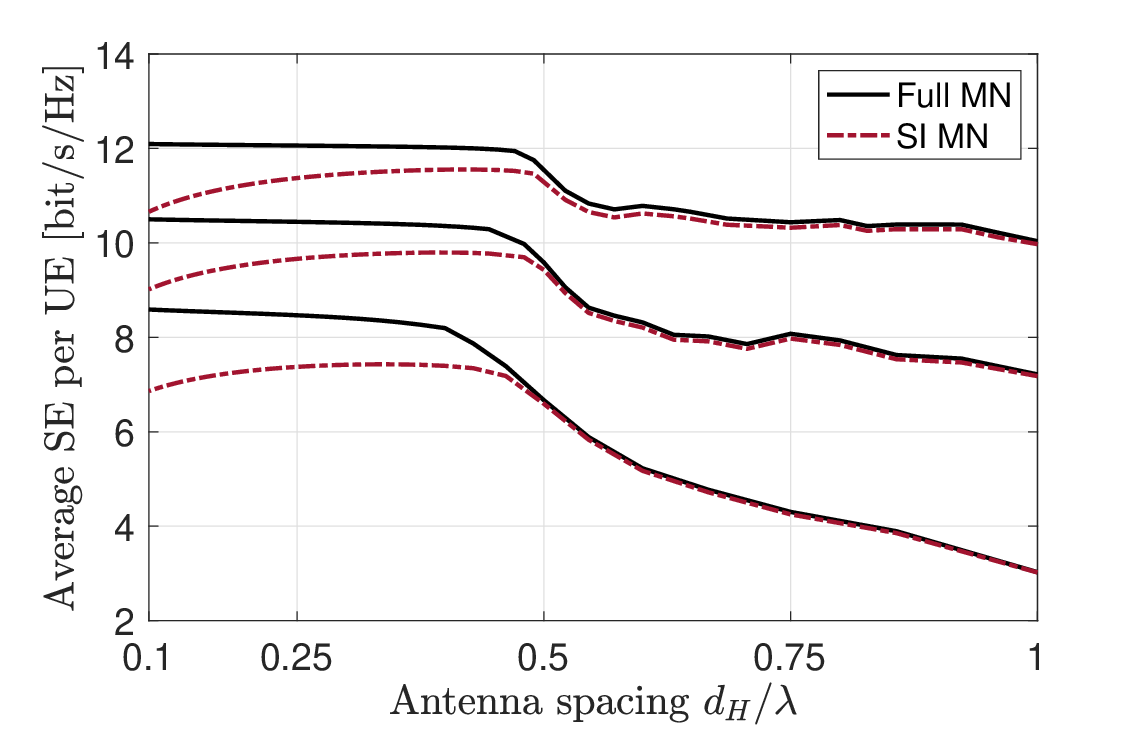}
 \put(55,13){\footnotesize $L_H = 6 \lambda$}
      \put(59,15){\vector(1, 1){4}}
 \put(72,25){\footnotesize $L_H = 12 \lambda$}
      \put(76,27){\vector(1, 1){4}}
       \put(55,51){\footnotesize $L_H = 24 \lambda$}
      \put(62,50){\vector(1, -1){4}}
\end{overpic}
\caption{ Average SE per UE with full noise or SI matching, for $K=10$ and an MMSE combiner.}\vspace{-0.5cm}
\label{SE_vs_dH_K10_LH61224}
\end{figure}

\subsection{Fixing the Array Size while Varying the Number of Antennas}

Fig.~\ref{fig:NMSE_11} illustrates the average SE per UE in the uplink as a function of $d_H/\lambda$ for a fixed array size $L_H = 6 \lambda$. Both MR and MMSE receivers exhibit similar SE behaviors. Notably, when a full matching network is employed, SE increases as $d_H/\lambda$ decreases due to the augmented number of antennas $M_{\rm BS}=L_H/d_H + 1$. This increase in antennas contributes to higher array gain and improved interference rejection. However, without a matching network or with a self-impedance matching network, the optimal performance is achieved when $d_H/\lambda \approx 0.4$. Going below this value may result in a decrease in SE. \textcolor{blue}{It is important to emphasize that the SE improvement observed when reducing $d_H$ with a full matching network cannot be attributed to antenna coupling. This is evident from the declining trend of the SE in Fig.~\ref{fig:NMSE} as the antenna spacing decreases.}

Fig.~\ref{SE_vs_dH_K10_LH61224} shows the average SE per UE as a function of $d_H/\lambda$ for three different values of the array size $L_H$. The number of UEs is $K=10$ and an MMSE combiner is employed, with either full noise or SI matching networks. As can be seen from the results, the behavior is the same irrespective of the array size. We only observe that, moving the antennas close to each other, the gain reduces as $L_H$ increases. In particular, when $L_H=6 \lambda$ and a full matching network is used, for $d_H=\lambda/10$ the average SE is about $4 \operatorname{bit/s/Hz}$ and drops to about $2 \operatorname{bit/s/Hz}$ for $d_H=\lambda$, with a ratio of 2 between the two values. On the other hand, when $L_H=24 \lambda$ the ratio decreases to about $6/4.5 \approx 1.33$. \textcolor{blue}{This suggests that the benefits of densely packing antennas are more pronounced for smaller array sizes. In addition, we find that the benefits of increasing the number of array elements for a given array size (resulting in a continuous antenna in practice) gradually diminish beyond a certain threshold.}   



\begin{figure}
\centering
\begin{subfigure}{.5\textwidth}
\begin{overpic}[width = \columnwidth]{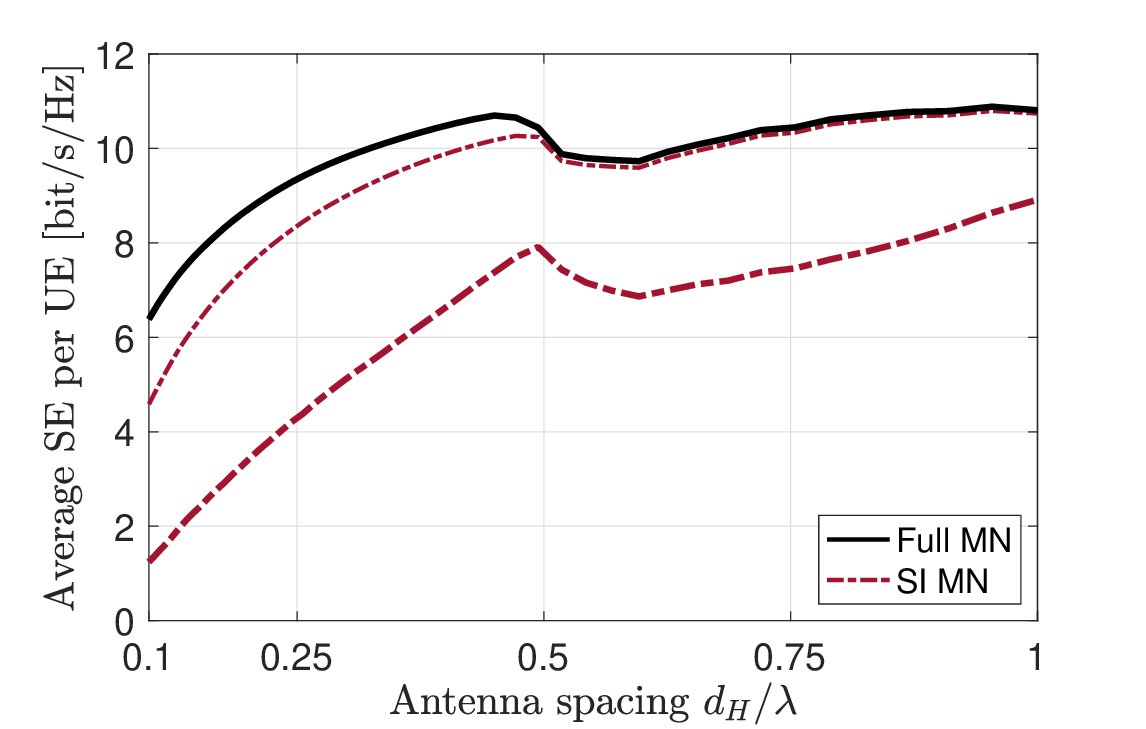}
\end{overpic}
\caption{$M_{\rm BS}= 32$ and $K=10$.}
\label{fig:SE_vs_dH_M32K8_DownLink}
\end{subfigure}
\begin{subfigure}{.5\textwidth}
  \centering
\begin{overpic}[width = \columnwidth]{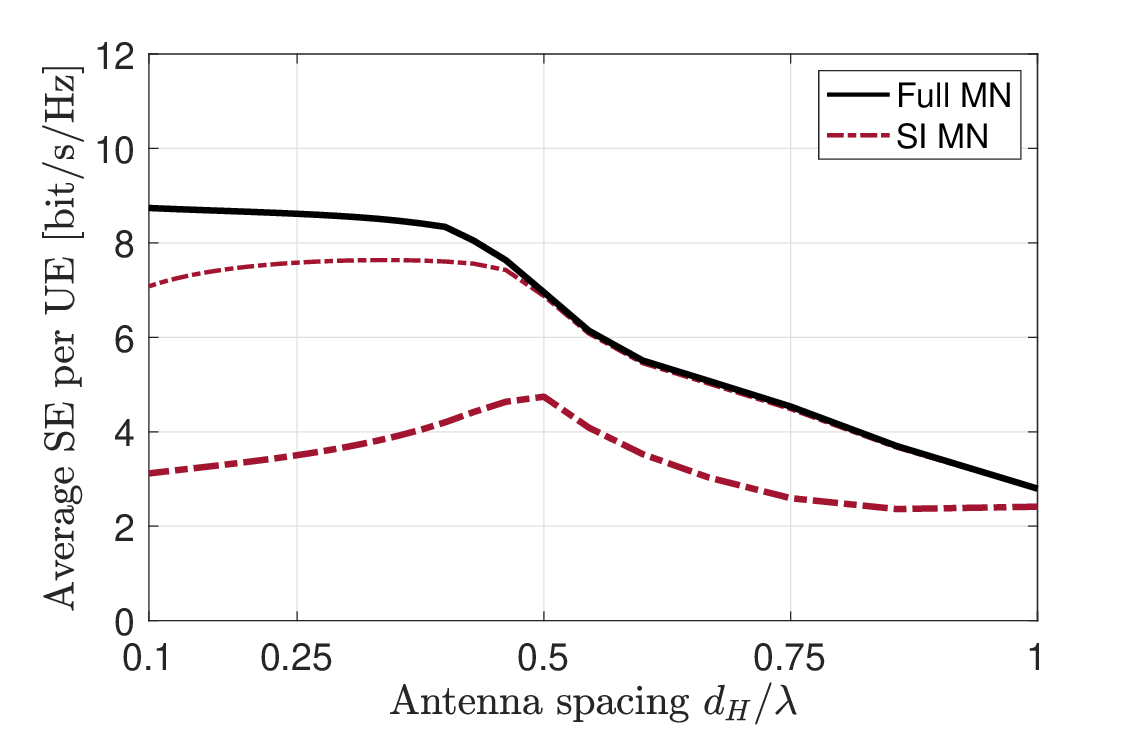}
\end{overpic}
\caption{$L_{H}= 6\lambda$ and $K=10$.}
\label{fig:SE_vs_dH_Lh6K10_DownLink}
\end{subfigure}
\caption{ Average SE per UE in the downlink with MMSE in the same setup of Figs. 10(b) and 12. The thicker lines refer to the case in which the uplink channels $\{\mathbf{h}_k^{\rm ul}; k=1,\ldots,K\}$ (instead of downlink channels $\{\mathbf{h}_k^{\rm dl}; k=1,\ldots,K\}$) are used for the computation of the precoding vectors. Both cases with a fixed number of antennas or a fixed antenna array aperture are considered.}
\label{fig:NMSE}
\end{figure}

\subsection{Impact of uplink and downlink duality}
We now consider the downlink with MMSE precoding, and with either a full or an SI matching network. From Table~\ref{tab:duality} it is seen that, with a full matching network, ${\bf h}_k^{\rm ul}$ and ${\bf h}_k^{\rm dl}$ differ only for a scaling factor. Accordingly, it is correct to design the MMSE precoder in downlink by using the measured value of ${\bf h}_k^{\rm ul}$ in uplink. On the other hand, when an SI matching network is employed, the uplink-downlink channel duality requires to apply a linear transformation to ${\bf h}_k^{\rm ul}$. A performance loss is incurred if this is not done.

Fig.~\ref{fig:SE_vs_dH_M32K8_DownLink} shows the average SE per UE in the same setup of Fig.~\ref{fig:SE_vs_dH_M32_K8,24}, i.e., with $M_{\rm BS} = 32$ antennas and $K = 10$ UEs. We see that, with a full matched network, the performance in uplink and downlink is the same. As for the SI matching design, two different cases have been considered. In the first case, the MMSE precoder is computed by using ${\bf h}_k^{\rm ul}$ instead of ${\bf h}_k^{\rm dl} = \dfrac {\alpha_{\rm dl}} {\alpha_{\rm ul}} {\bf B}_{\rm dl}^{-\Ttran/2}{\bf A}_{\rm dl,ul}{\bf h}_k^{\rm ul}$, as indicated in Table~\ref{tab:duality}. In the second case, the MMSE precoder is correctly computed taking the matrix ${\bf B}_{\rm dl}^{-\Ttran/2}{\bf A}_{\rm dl,ul}$ into account. Comparing the results in Fig.~\ref{fig:SE_vs_dH_M32K8_DownLink} with those in 
Fig.~\ref{fig:SE_vs_dH_M32_K8,24}, we see that in the latter case the average SE is the same in uplink and downlink, while a considerable loss is observed in the former case (thicker line), especially at low values of $d_H/\lambda$. The same conclusions can be drawn from Fig.~\ref{fig:SE_vs_dH_Lh6K10_DownLink}, obtained in the simulation setting of Fig.~\ref{SE_vs_dH_K10_LH61224}, which shows the average SE per UE for a fixed size $L_H = 6 \lambda$ of the array and $K=10$.

\section{Conclusions}\label{sec:conclusions}

Building on the multiport communication theory (e.g.,~\cite{Nossek2014,Nossek2010}), a physically-consistent representation of MIMO channels can be derived and directly used by communication theorists as a baseline for modelling the uplink and downlink of holographic MIMO communications, with closely spaced antennas. Particularly, we used it to study the effects of mutual coupling on the spectral efficiency and to gain insights into interplay between antenna spacing and impedance matching network designs. \textcolor{blue}{We focused on side-by-side half-wavelength dipoles in a LoS scenario.} Numerical and analytical results showed that for the investigated scenarios, a fixed number of closely spaced antennas with impedance matching can provide spectral efficiency benefits, but only for specific directions of the incident signals. On average, the spectral efficiency gains may be marginal or even non-existent. We explored a scenario where antennas were closely packed in a space-constrained form, and we showed that reducing the antenna distance led to an increase in spectral efficiency. However, this increase becomes negligible as the array aperture size grows (in the order of tens of wavelengths). The uplink and duality duality was also investigated for different matching network designs. \textcolor{blue}{We limited our study to uniform linear arrays of side-by-side half-wavelength dipoles operating in LoS conditions. However, we notice that the analytical framework can be used to extend the results to different array configurations and non-LoS propagation conditions.}


{\color{blue}\section*{Appendix A}\label{Appendix}
In this Appendix, we derive the expression of ${\bf d}_{{\rm OC}}$ given in \eqref{eq:d_OC}. The computation will be made for a LoS propagation scenario on the basis of the two following assumptions:

\begin{assumption}
Each antenna of the BS array is in the far field of the UE transmit antenna. In the far-field region, the transmit antenna behaves like a \textit{source point}, so the radiated field can be approximately characterized by spherical wavefronts. If the transmitting antenna has a maximum dimension of $L_t$, the far-field region is at distances greater than $2L_t^2/\lambda$.

\end{assumption}
\begin{assumption}
The electromagnetic wave produced by the UE transmitting antenna and impinging on a receive antenna of the BS array can be \textit{locally} approximated by a plane wave. 
This approximation can be made provided that the distance between the transmit and receive antennas is greater than $2 L_r^2 / \lambda$, where $L_r$ represents the maximum dimension of the \textit{receive} antenna. 
\end{assumption}

Let ${\bf E}_{{\rm inc},m}$ denote the electric field incident on the $m$th antenna of the BS array, produced by the UE's voltage source. Based on Assumption 2, we model ${\bf E}_{{\rm inc},m}$ as a plane wave that reaches the receive antenna from a particular azimuth angle $\phi_{{\rm inc},m} \in [-\pi/2, \pi/2)$ and elevation angle $\theta_{{\rm inc},m} \in [-\pi/2, \pi/2)$. Assuming that the BS array consists of canonical minimum scattering (CMS) antennas and the incident field is linearly polarized, the $m$th element of ${\bf v}_{{\rm OC}}$ reads~\cite[Eq. (2-93)]{balanis}
\begin{equation}
\label{voc_einc}
v_{{\rm OC},m} = {\bf E}_{{\rm inc},m} \cdot {\bf l}^{(\rm r)}_{{\rm eff},m}(\theta_{{\rm inc},m},\phi_{{\rm inc},m})
\end{equation}
where ${\bf l}^{(\rm r)}_{{\rm eff},m}(\theta_{{\rm inc},m},\phi_{{\rm inc},m})$ is the \textit{effective length} ~\cite[Eq. (2-91)]{balanis} of the \textit{isolated} $m$th element of the array towards the $(\theta_{{\rm inc},m},\phi_{{\rm inc},m})$ direction. 
From Assumption 1, we have that~\cite[Eq. (2-92)]{balanis}
\begin{equation}
\label{eq:Einc}
\begin{split}
{\bf E}_{{\rm inc},m} = -\imagunit k  Z_{0} i_{\rm AT} \dfrac{e^{- \imagunit \frac{2\pi}{\lambda} r_m}}{4 \pi r_m} {\bf l}^{(\rm t)}_{{\rm eff}} (\theta_{{\rm inc},m},\phi_{{\rm inc},m})  
\end{split}
\end{equation}
where $i_{\rm AT}$ is the current feeding the antenna, $r_m$ is the distance between the centers of the transmit and receive antennas, ${\bf l}^{(\rm t)}_{{\rm eff}} (\theta_{{\rm inc},m},\phi_{{\rm inc},m})$ is the effective length of the transmit antenna in the direction of departure which, in a LoS scenario, coincides with the direction of arrival $(\theta_{{\rm inc},m},\phi_{{\rm inc},m})$. Plugging \eqref{eq:Einc} into \eqref{voc_einc} yields
\begin{equation}
\label{voc_einc_2}
\begin{split}
v_{{\rm OC},m} =  \alpha'(\theta_m,\phi_m,r_m)  Z_{0} i_{\rm AT} e^{- \imagunit \frac{2\pi}{\lambda} r_m}
\end{split}
\end{equation}
with
\begin{equation}
\label{alfap}
\alpha'(\theta_m,\phi_m,r_m) =  - \imagunit \dfrac{{\bf l}^{(\rm t)}_{{\rm eff}} (\theta_m, \phi_m) \cdot {\bf l}^{(\rm r)}_{{\rm eff}}(\theta_m,\phi_m)}{2 \lambda r_m} 
\end{equation}
where the term ${\bf l}^{(\rm t)}_{{\rm eff}} (\theta_m, \phi_m)\cdot{\bf l}^{(\rm r)}_{{\rm eff}}(\theta_m,\phi_m)$ accounts for the polarization loss~\cite[Sect. 2.12.2]{balanis}. For the sake of notation, we have dropped the subscript $_{\rm inc}$ so that $\theta_{{\rm inc},m}$ and $\phi_{{\rm inc},m}$ become $\theta_{m}$ and $\phi_{m}$, respectively. According to \eqref{voc_einc_2}, we can write ${\bf v}_{\rm OC} = Z_{0} i_{\rm AT} \boldsymbol{\alpha}'(\boldsymbol{\psi},\bf{r})\odot{\bf a}({\bf r})$. Finally, from ${\bf v}_{{\rm OC}} = v_{\rm G} {\bf d}_{{\rm OC}}$ and $i_{\rm AT} = F_{\rm T} (Z_{\rm G}+Z_{\rm T})^{-1} v_{\rm G}$, we obtain \eqref{eq:d_OC} where $\boldsymbol{\alpha}(\boldsymbol{\psi},{\bf r}) = F_{\rm T} (Z_{\rm G}+Z_{\rm T})^{-1} Z_0\boldsymbol{\alpha}'(\boldsymbol{\psi},{\bf r})$.

Depending on the relationship between the size of the array and its distance from the transmitting antenna, the expression of $v_{{\rm OC},m}$ can be simplified according to, for example, the \textit{Fresnel approximation} or the well-known \textit{planar wave approximation} \cite{Friedlander2019}. The latter differs from the planar wave approximation of Assumption 2 because it is relevant to the array while the one in Assumption 2 is relevant to the single array element. Under the planar wave approximation, 
\eqref{eq:d_OC} is reduced to the well-known expression in \eqref{doc}
where $\alpha(\theta,\phi,r) = F_{\rm T} (Z_{\rm G}+Z_{\rm T})^{-1} Z_0\alpha'(\theta,\phi,r)$, 
\begin{equation}
\alpha'(\theta,\phi, r) =  - \imagunit e^{\imagunit \psi_0}\dfrac{{\bf l}^{(\rm t)}_{{\rm eff}} (\theta, \phi) \cdot {\bf l}^{(\rm r)}_{{\rm eff}}(\theta,\phi)}{2 \lambda r}
\end{equation}
and $\psi_0 = - 2 \pi r/\lambda$ being the reference phase at array center. 
}

\vspace{-0.4cm}
\section*{Appendix B \\ Connection with the scattering representation}
Instead of dealing with voltages and currents, incident and reflected \textit{power waves} can describe multiport systems (e.g., \cite{Wallace2004}). At port $n$ of a multiport network, we define the \textit{scattering parameters} $a_{n}$ and $b_{n}$, representing the complex envelopes of the inward-propagating (incident) and outward-propagating (reflected) power waves, respectively. They relate to the voltage and current, $v_{n}$ and $i_{n}$, measured at the same port, as~\cite{Kurokawa1965},\cite[Ch. 4]{Pozar}:
\begin{equation}
\label{An}
a_{n}=\dfrac{v_{n}+Z_{n}i_{n}}{2 \sqrt{\re(Z_{n})}} \quad \quad
b_{n}=\dfrac{v_{n}-Z_{n}^{\ast}i_{n}}{2 \sqrt{\re(Z_{n})}}
\end{equation}
where $Z_{n}$ is a chosen reference impedance used for computing the scattering parameters. The physical meaning of $a_{n}$ and $b_{n}$ can be appreciated by computing $|a_{n}|^{2}-|b_{n}|^{2}=\re(v_{n}i^{\ast}_{n})$. 
which represents the total power flowing into port $n$. This is valid for any reference impedance $Z_{n}$. Hence, the total power flowing into a multiport system is $\re({\bf v}^{\Htran}{\bf i})=\|{\bf a}\|^{2}-\|{\bf b}\|^{2}$
where ${\bf v}$ and ${\bf i}$ are the vectors of the voltages and currents at the ports of the network, while $\bf a$ and $\bf b$ are vectors collecting the scattering parameters $a_{n}$ and $b_{n}$, respectively. The amplitudes of the incident and reflected waves are such that ${\bf b} = {\bf S} {\bf a}$ where $\bf S$ is the \textit{scattering} matrix.
The latter can be obtained from the impedance matrix $\bf Z$ as, e.g.,~\cite[Ch.~4, Eq. (4.68)]{Pozar} 
\begin{equation}
\label{SvsZ}
{\bf S}={\bf F} ({\bf Z-G^{\ast}}) ({\bf Z+G})^{-1}{\bf F}^{-1}
\end{equation}
where $\bf F$ and $\bf G$ are diagonal matrices with the $n$th diagonal elements $1/2\sqrt{\re(Z_{n})}$ and $Z_{n}$, respectively. By substituting each impedance matrix with its corresponding scattering matrix, based on~\eqref{SvsZ}, we describe the system in terms of scattering parameters instead of voltages and currents. Both descriptions are equivalent. For CMS antennas, the impedance description is preferred because it can be obtained directly from the isolated radiation pattern.

\bibliographystyle{IEEEtran}
\bibliography{IEEEabrv,refs}

\end{document}